\documentclass[11pt,twoside]{amsart}
\usepackage{setspace}
\usepackage{amsmath,hyperref}
\usepackage{amstext}
\usepackage{amssymb}
\usepackage{amsthm} 
\usepackage{mathrsfs}
\usepackage{xcolor}
\usepackage{enumerate}   
\setstretch{1.2}
\usepackage{tikz}
\usepackage{geometry}
\geometry{a4paper,hcentering,vcentering,outer=2.5cm,top=2.9cm}

\newtheorem{thm}{Theorem}[section]
\newtheorem{cor}[thm]{Corollary}
\newtheorem{lem}[thm]{Lemma}
\newtheorem{prop}[thm]{Proposition}
\theoremstyle{definition}
\newtheorem{defn}[thm]{Definition}
\newtheorem{rem}[thm]{Remark}
\newtheorem{ass}[thm]{Assumption}

\numberwithin{equation}{section}

\newcommand{\be}{\begin{equation}}
\newcommand{\ee}{\end{equation}}
\newcommand{\ba}{\begin{aligned}}
\newcommand{\ea}{\end{aligned}}
\newcommand{\cA}{\mathcal{A}}
\newcommand{\cF}{\mathcal{F}}
\newcommand{\bF}{\mathbb{F}}
\newcommand{\bP}{\mathbb{P}}
\newcommand{\bQ}{\mathbb{Q}}
\newcommand{\R}{\mathbb{R}}

\newcommand{\cS}{\mathcal{S}}
\newcommand{\cC}{\mathcal{C}}

\newcommand{\cK}{\mathcal{K}}
\newcommand{\cKs}{\mathcal{K}_{\rm s}}

\newcommand{\cU}{\mathcal{U}}

\newcommand{\cX}{\mathcal{X}}
\newcommand{\cXs}{\mathcal{X}_{\rm s}}

\newcommand{\ud}{\,\mathrm{d}}
\newcommand{\Lloc}{L^2_{\rm loc}}

\DeclareMathOperator{\tr}{Tr}

\title{A stochastic control perspective on term structure models with roll-over risk}

\author[C. Fontana]{Claudio Fontana}
\address{Department of Mathematics ``Tullio Levi - Civita'', University of Padova, Italy.}
\email{fontana@math.unipd.it}

\author[S. Pavarana]{Simone Pavarana}
\address{Department of Mathematical Stochastics, University of Freiburg, Germany.}
\email{simone.pavarana@stochastik.uni-freiburg.de}

\author[W.J. Runggaldier]{Wolfgang J. Runggaldier}
\address{Department of Mathematics ``Tullio Levi - Civita'', University of Padova, Italy.}
\email{runggal@math.unipd.it}

\date{\today}

\keywords{Roll-over risk; liquidity risk; interest rate; multiplicative spread; term rate; benchmark approach; stochastic control; risk-sensitive portfolio optimization.}
\thanks{{\em JEL classification}: C02, C61, E43, G12.  \\
{\em MSC2020 classification}: 60G44, 91G30, 93E20. \\
The authors are grateful to Eckhard Platen, the Editor, an Associate Editor and a Reviewer for helpful comments that helped to improve the paper.
The first author gratefully acknowledges financial support from the University of Padova (research programme STARS StG PRISMA).
The second author thanks the Carl-Zeiss foundation for the financial support.}

\begin{document}

\maketitle

\begin{abstract}
In this paper, we consider a generic interest rate market in the presence of roll-over risk, which generates spreads in spot/forward term rates. 
We do not require classical absence of arbitrage and rely instead on a minimal market viability assumption, which enables us to work in the context of the benchmark approach. In a Markovian setting, we extend the control theoretic approach of \cite{GR13} and derive representations of spot/forward spreads as value functions of suitable stochastic optimal control problems, formulated under the real-world probability and with power-type objective functionals. We determine endogenously the funding-liquidity spread by relating it to the risk-sensitive optimization problem of a representative investor.
\end{abstract}

\vspace{1cm}

{\em This work is dedicated to the memory of Tomas Bj\"ork, whose scientific work and teachings had a deep impact on the Mathfinance community, in particular for what concerns the term structure of interest rates.}

\section{Introduction}		\label{sec:intro}

Over the last fifteen years, interest rate markets have been marked by two major facts: first, starting with the global financial crisis, the emergence of the ``multi-curve'' phenomenon; second, in more recent years, the reform of interest rate benchmarks.
The multi-curve phenomenon refers to the fact that interbank term rates, such as Euribor and Libor rates, exhibit different characteristics depending on their tenor (i.e., the length of the term of the underlying loan).
In particular, interbank term rates differ from risk-free forward rates by a certain spread, which depends on the specific tenor under consideration.
This is due to the presence of counterparty, funding and liquidity risks that have emerged as major sources of risk during and after the global financial crisis (see \cite{FilipovicTrolle13} and, for an overview of the topic, \cite{GrbacRunggaldier}).
The reform of interest rate benchmarks aims at overcoming the deficiencies in the mechanism determining interbank benchmark rates. Existing benchmark interest rates such as Libor rates are being gradually replaced by overnight rates, i.e., nearly risk-free rates referring to a tenor of one business day. At the current stage of the reform, one of the central issues concerns the construction and the use of term rates, i.e., rates referring to tenors that are longer than overnight. 

The multi-curve phenomenon and the question of term rates in the Libor reform are related to a common fundamental issue: the different impact of counterparty, funding and liquidity risks in lending/borrowing at term with respect to rolled-over investments at overnight frequency. We generically refer to this aspect as {\em roll-over risk}, following the recent work \cite{BMSS23}.
Apart from counterparty risk factors, that are not explicitly considered in this work, roll-over risk refers to the possibility that a borrower who needs to refinance a loan at a certain future date may only do so at an increased interest rate, due to insufficient liquidity in the money market. 
This increased interest rate is captured through a funding-liquidity spread, which represents one of the main ingredients of our framework, similarly as in \cite{BMSS23}.
In the market, the funding-liquidity spread should be priced in term rates, since they effectively hedge against roll-over risk. Therefore, even in the absence of counterparty risk, roll-over risk provides an  explanation of the multi-curve phenomenon, since the spreads associated to term rates with different tenors are due to the increased funding-liquidity risk over longer time horizons. In this sense, the present work continues on the line of \cite{FilipovicTrolle13,Gallitschke17,Alfeus20}.
The recent empirical analysis of \cite{SS22} demonstrates that roll-over risk is of equal importance to credit risk in explaining the spread between Libor rates and rates of overnight indexed swaps (OIS).

The first contribution of this work is to propose a general framework for roll-over risk in interest rate markets, considering a financial market composed by zero-coupon bonds for all maturities together with single-period swaps referencing term rates.
The study of bond markets with uncountably many assets goes back to the seminal works \cite{BDMKR97,BKR97} of T. Bj\"ork and co-authors.
In our setting, we require a minimal market viability condition, which amounts to the existence of the num\'eraire (or growth-optimal) portfolio, making use of the recent results of \cite{Kardaras22}. In particular, we develop our theory under the real-world probability, since market viability does not suffice to ensure the existence of a risk-neutral probability. 
Our framework therefore fits into the benchmark approach developed by E. Platen and co-authors (see \cite{PlatenHeath}). We show that market viability does suffice to provide general characterizations of spot and forward spreads associated to term rates.
In doing so, we extend the setup of \cite{BMSS23} by imposing less stringent no-arbitrage requirements and weaker modelling assumptions.

Our second contribution consists in showing that, in a Markovian setting, spot and forward spreads can be represented as value functions of suitable stochastic optimal control problems. This extends the results of \cite{GR13}, where a similar program has been carried out for classical (single-curve) term structures. 
As acknowledged in \cite{GR13}, the original idea of linking the term structure equation to stochastic control goes back to earlier discussions between T. Bj\"ork and the last author of this work.
In the risk-neutral setting, the approach of \cite{GR13} consists in viewing the term structure PDE as the HJB equation associated to a stochastic optimal control problem where the dynamics of the underlying Markov factor process are affected by a feedback control, thus obtaining a representation of bond prices as the corresponding value function. This method has also been extended to swap measures in \cite{Cogo13}.
Working under the real-world probability, we generalize this approach to the case of multi-curve term structures generated by roll-over risk, under the assumption that the growth-optimal portfolio has a Markovian structure. The obtained stochastic control representations are based on power-type transformations and enable us to interpret spot/forward spreads as the values of hypothetical games between a lender and a borrower (see Remark \ref{rem:interpretation} below).

Our last contribution concerns a possible approach for the determination of the funding-liquidity spread, one of the key ingredients of our setup. We start from the observation that, if a risk-free savings account is assumed to exist, as we do in our setting, then a borrowing account affected by funding-liquidity risk cannot be fairly priced by a marginal utility pricing rule based on logarithmic preferences. We then consider a more risk-averse representative investor who optimally trades according to a risk-sensitive criterion. By considering more risk-averse preferences than logarithmic ones, we assume that the representative investor prices correctly roll-over risk. This enables us to provide an equation for the funding-liquidity spread, which depends on the risk aversion of the representative investor and on the model coefficients.
Moreover, a common risk aversion coefficient can be chosen for the risk-sensitive investment problem and for the stochastic optimal control problems described above, thus providing a characterization of all quantities in the model in terms of the risk preferences of a single representative investor.

The paper is structured as follows. In Section \ref{sec:setup}, we revisit some foundational concepts underlying the benchmark approach for a generic financial market containing infinitely many assets.
We then introduce the roll-over-risk-adjusted borrowing account and spot and forward term rates, alongside with their fundamental properties. In Section \ref{sec:control}, we consider a Markovian setting and derive representations of spot and forward spreads as value functions of stochastic optimal control problems. In Section \ref{sec:sensitive}, we propose an approach to determine the funding-liquidity spread by relating it to a risk-sensitive optimization problem of a representative investor.

\section{A general interest rate market with roll-over risk}		\label{sec:setup}

In this section, we present a general setup for an interest rate market in the presence of roll-over risk. We start in Section \ref{sec:market} by discussing market viability for a generic financial market containing zero-coupon bonds for all maturities. In Section \ref{sec:ROrisk}, we introduce the roll-over-risk-adjusted borrowing account and, in Section \ref{sec:spot}, we describe its connection to term rates. Section \ref{sec:fwd} completes the description of the interest rate market by introducing forward term rates.
We let $(\Omega,\cF,\bP)$ be a probability space endowed with a right-continuous filtration $\bF=(\cF_t)_{t\geq0}$, with respect to which all processes introduced in the following are assumed to be adapted.

\subsection{Setting and market viability}	\label{sec:market}

We consider a generic financial market in an infinite time horizon where  a family $\cS=\{S^i:i\in I\}$ of assets is traded, where $I$ is a non-empty index set. We assume that all elements $S^i$, $i\in I$, are strictly positive processes with continuous paths. We furthermore assume the existence of a savings account process $S^0:=\exp(\int_0^{\cdot}r_t\ud t)$, where $r$ denotes the instantaneous risk-free rate, satisfying $\int_0^T|r_t|\ud t<+\infty$ a.s. for all $T>0$.

The family $\cS$ is assumed to include {\em zero-coupon bonds} (ZCB) for all maturities $T>0$, together with the contracts introduced in Sections \ref{sec:ROrisk} and \ref{sec:fwd} as well as possible additional securities that are not explicitly modelled in this work. 
We denote by $P(t,T)$ the price at time $t$ of a ZCB with maturity $T$, for all $0\leq t\leq T<+\infty$.
In this market setting, the index set $I$ is uncountable. At this level of generality, a complete analysis of this infinite-dimensional financial market has been developed in \cite[Section 4.3]{KK21} and \cite{Kardaras22}, on which we rely for the present subsection.

We assume that trading occurs in a self-financing way, investing in a finite but arbitrary number of securities with simple strategies. In this subsection, we use a bar notation to denote quantities discounted with respect to $S^0$. In particular, $\bar{S}^i:=S^i/S^0$ denotes the discounted price process of asset $i$, for each $i\in I$, while we denote by $\bar{V}$ the discounted value process of a generic portfolio, as defined below.
Discounted gains from trading processes are of the form
\be	\label{eq:stoch_int}
\sum_{j\in J}\int_0^{\cdot}\theta^j_t\ud \bar{S}^j_t,
\ee
where $J$ is a finite subset of $I$ and $\{\theta^j:j\in J\}$ is a collection of simple predictable processes. 
Letting $\cKs$ be the set of all processes of the form \eqref{eq:stoch_int}, the set of {\em simple portfolios} is defined as
\be	\label{eq:simple_pf}
\cXs := \bigl\{1+\bar{X} : \bar{X}\in\cKs\text{ and }1+\bar{X}_t>0\text{ a.s. for all }t\geq0\bigr\}.
\ee

In this market setup, a first and fundamental question concerns absence of arbitrage. We say that {\em market viability} holds if the following condition is satisfied:
\be	\label{eq:NUPBR}
\lim_{m\to+\infty}\sup_{\bar{V}\in\cXs}\bP(\bar{V}_T>m)=0,
\qquad\text{ for all }T>0.
\ee
Condition \eqref{eq:NUPBR} is equivalent to the notion of viability considered in \cite{Kardaras22}. 
In a finite-dimensional setup, condition  \eqref{eq:NUPBR} becomes equivalent to the {\em no unbounded profit with bounded risk} (NUPBR) condition of \cite{KaratzasKardaras07} for all finite-time horizons, restricted to simple strategies.
The fundamental theorem of asset pricing in the version of \cite[Theorem 3.3]{Kardaras22} asserts that market viability holds if and only if there exists a {\em local martingale deflator} (LMD), i.e., a strictly positive local martingale $Y$ with $Y_0=1$ such that $Y\bar{S}^i$ is a local  martingale, for every $i\in I$.
Note that the existence of an LMD implies the semimartingale property of $\bar{S}^i$, for every $i\in I$. 
In the following, without further mention, we consider market viability as a {\bf standing assumption}.

In the present market setup, it is natural to allow for the theoretical possibility of investing in infinitely many assets.\footnote{In interest rate markets containing ZCBs for all maturities, measure-valued strategies represent a possibility to define portfolios investing in infinitely many assets, as considered in \cite{BDMKR97,BKR97} also allowing for discontinuous price processes. We choose to work in the framework of \cite{KK21,Kardaras22} since it yields a workable characterization of market viability that naturally extends the theory of finite-dimensional financial markets.}
To this effect, we denote by $\cK$ the closure of $\cKs$ in the semimartingale topology (see \cite[Section III.6c]{JS03}), restricted to continuous semimartingales. Similarly to \eqref{eq:simple_pf}, the set of {\em extended portfolios} is defined as $\cX:=\{1+\bar{X} : \bar{X}\in\cK\text{ and }1+\bar{X}_t>0\text{ a.s. for all }t\geq0\}$.

\begin{thm}	\label{thm:NUPBR}
Market viability holds if and only if there exists an extended portfolio $\bar{V}^*\in\cX$ such that $1/\bar{V}^*$ is a LMD. Moreover, the process $\bar{V}/\bar{V}^*$ is a local martingale, for every $\bar{V}\in\cX$.
\end{thm}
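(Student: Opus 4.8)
The plan is to reduce the statement to the fundamental theorem of asset pricing recalled above (\cite[Theorem 3.3]{Kardaras22}) and to the existence theory for the num\'eraire portfolio in the infinite-dimensional framework of \cite{KK21,Kardaras22}. The implication ``$\Leftarrow$'' is then immediate: if $\bar{V}^*\in\cX$ is such that $1/\bar{V}^*$ is a local martingale deflator, then in particular a local martingale deflator exists --- note that $\bar{V}^*_0=1$, since every process in $\cK$ vanishes at time $0$, so $(1/\bar{V}^*)_0=1$ --- and market viability follows from the ``only if'' part of the fundamental theorem of asset pricing.

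For the converse, assume market viability. By the fundamental theorem of asset pricing a local martingale deflator exists, so each $\bar{S}^i$ is a continuous semimartingale; since the traded assets have continuous paths, the market falls within the framework of \cite[Section 4.3]{KK21}, where market viability is shown to imply the existence of the num\'eraire (equivalently, growth-optimal) portfolio $\bar{V}^*\in\cX$. Concretely, $\bar{V}^*$ maximises the expected logarithmic growth over $\cX$, its reciprocal is the minimal local martingale deflator associated with the continuous market, and $\bar{V}^*$ solves a linear stochastic equation driven by the discounted traded assets; this last fact exhibits $\bar{V}^*-1$ as a limit in the semimartingale topology of elements of $\cKs$, whence $\bar{V}^*\in\cX$, while $1/\bar{V}^*$ is a local martingale deflator by construction. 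I expect this to be the main obstacle: one must produce the deflator \emph{inside} $\cX$ --- not merely as an abstract positive process --- and verify the approximation by simple strategies, which is exactly the content supplied by \cite{KK21,Kardaras22} in this infinite-dimensional setting, and where the restriction of $\cK$ to continuous semimartingales is essential.

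It remains to prove the second assertion. Write $Y^*:=1/\bar{V}^*$, which is continuous and strictly positive, and consider first $\bar{V}=1+\bar{X}\in\cXs$, so that $\bar{X}=\sum_{j\in J}\int_0^{\cdot}\theta^j_t\,\ud\bar{S}^j_t$ with $J$ finite and the $\theta^j$ simple. Integration by parts gives, for each $i\in I$, that $\int Y^*_-\,\ud\bar{S}^i+[\bar{S}^i,Y^*]=Y^*\bar{S}^i-\bar{S}^i_0-\int\bar{S}^i_-\,\ud Y^*$ is a local martingale, because $Y^*$ and $Y^*\bar{S}^i$ are local martingales and $\bar{S}^i_-$ is locally bounded. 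Integrating the locally bounded processes $\theta^j$ against these local martingales, and adding the local martingale $\int\bar{X}_-\,\ud Y^*$, a further integration by parts shows that $\bar{X}Y^*$, hence $\bar{V}/\bar{V}^*=(1+\bar{X})Y^*$, is a local martingale. For an arbitrary $\bar{V}=1+\bar{X}\in\cX$, the set $A:=\{\bar{X}\in\cK:(1+\bar{X})Y^*\text{ is a local martingale}\}$ contains $\cKs$ by what precedes; moreover $A$ is closed in the semimartingale topology, since $\bar{X}\mapsto(1+\bar{X})Y^*-1$ is continuous for that topology and the space of local martingales null at $0$ is closed in it (see \cite[Section III.6c]{JS03}). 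Therefore $A=\cK$, that is, $\bar{V}/\bar{V}^*$ is a local martingale for every $\bar{V}\in\cX$, which completes the proof.
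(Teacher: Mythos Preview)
The paper's proof is a one-line citation to \cite[Exercise 4.48]{KK21}; your proposal unpacks what that reference contains and is therefore more explicit rather than genuinely different in spirit. Your reduction of the equivalence to the fundamental theorem and to the num\'eraire-portfolio existence theory in \cite{KK21,Kardaras22} is correct, and the integration-by-parts argument for simple portfolios is fine.

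There is, however, a gap in the closure step for the second assertion. You assert that $\bar{X}\mapsto(1+\bar{X})Y^*$ is continuous for the semimartingale topology. Expanding via integration by parts, this would require continuity of $\bar{X}\mapsto\int\bar{X}_-\,\ud Y^*$ with a \emph{varying} integrand, and that does not follow from \'Emery convergence alone: ucp convergence of the integrand is insufficient without a uniform local bound. The repair is easy and stays within your strategy. Show instead that the map $\bar{X}\mapsto\int Y^*_-\,\ud\bar{X}+[\bar{X},Y^*]$ is continuous in the semimartingale topology (both summands are, by standard properties: continuity of stochastic integration in the integrator for a fixed locally bounded integrand, and continuity of the quadratic covariation) and that it sends $\cKs$ into local martingales---this is precisely your integration-by-parts computation, using that $Y^*$ is an LMD. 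Closedness of the local martingales then extends this to all of $\cK$. Since $\int\bar{X}_-\,\ud Y^*$ is always a local martingale (continuous $\bar{X}$ is locally bounded and $Y^*$ is a local martingale), a final integration by parts gives that $(1+\bar{X})Y^*=Y^*+\int\bar{X}_-\,\ud Y^*+\int Y^*_-\,\ud\bar{X}+[\bar{X},Y^*]$ is a local martingale for every $\bar{X}\in\cK$.
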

\begin{proof}
The result follows directly from \cite[Exercise 4.48]{KK21}.
\end{proof}

In line with the literature, we call {\em num\'eraire portfolio} the portfolio $\bar{V}^*$ satisfying the properties stated in Theorem \ref{thm:NUPBR}.  In a general finite-dimensional financial market, the equivalence between condition \eqref{eq:NUPBR} and the existence of the num\'eraire portfolio has been proved in \cite{KaratzasKardaras07,KKS16}.

\begin{rem}[Market viability with extended portfolios]
While market viability is defined in \eqref{eq:NUPBR} in terms of simple portfolios, it actually holds also with respect to extended portfolios. Indeed, let $\bar{V}^*$ be the num\'eraire portfolio, which exists as long as \eqref{eq:NUPBR} holds. For any $\bar{V}\in\cX$, the process $\bar{V}/\bar{V}^*$ is a strictly positive local martingale and, by Fatou's lemma, a supermartingale. Therefore, for all $T>0$, it holds that $E[\bar{V}_T/\bar{V}^*_T]\leq1$, meaning that the set $\{\bar{V}_T/\bar{V}^*_T:\bar{V}\in\cX\}$ is bounded in $L^1$. Since boundedness in $L^1$ implies boundedness in $L^0$ and boundedness in $L^0$ is invariant by multiplication with a fixed strictly positive random variable, it follows that $\{\bar{V}_T:\bar{V}\in\cX\}$ is bounded in $L^0$ for all $T>0$, i.e., condition \eqref{eq:NUPBR} holds with respect to the set $\cX$ of extended portfolios.
\end{rem}

\begin{rem}
In our setup, the discounting asset is chosen as the savings account $S^0$ generated by the risk-free rate $r$. This is coherent with the current adoption of the {\em secured overnight financing rate} (SOFR) as the discounting rate for most cleared derivatives. Accordingly, market viability has been directly defined with respect to $S^0$-discounted quantities. 
An alternative approach, recently pursued by \cite{BalintSchweizer22}, consists in considering a discounting-invariant absence of arbitrage condition (named {\em dynamic share viability}) on the original undiscounted financial market represented by $(S^0,\cS)$, without fixing a priori the discounting unit. The corresponding theory for large financial markets containing countably many assets is developed in  \cite{BalintSchweizer20}.
\end{rem}

The num\'eraire portfolio enjoys several optimality properties. In particular, it coincides with the {\em growth-optimal portfolio} (GOP), i.e., the extended portfolio that achieves the maximal instantaneous logarithmic growth rate. This follows from \cite[Exercise 4.49]{KK21} (in a general finite-dimensional setup, the analogous property has been shown in \cite{KaratzasKardaras07}).
In Section \ref{sec:markov}, we shall consider a finite-dimensional Markovian setting and provide an explicit description of the GOP, together with an explicit characterization of the validity of condition \eqref{eq:NUPBR}.

The num\'eraire portfolio (or, equivalently, the GOP) plays a central role in the {\em benchmark approach}, see \cite[Chapter 10]{PlatenHeath}. The benchmark approach adopts the GOP as the reference (benchmark) asset and develops a pricing theory that does not rely on risk-neutral valuation. 
Indeed, it is well-known that market viability does not suffice to ensure the existence of a risk-neutral probability, while it represents the minimal requirement allowing for a meaningful solution to pricing and hedging problems as well as to optimal investment/consumption problems (see, e.g., \cite{CCFM17,CDM15,FontanaRunggaldier13,KaratzasKardaras07} in the finite-dimensional case). 
The notions of {\em fair portfolio} and {\em real-world price} are central in the benchmark approach and can be defined as in Definition \ref{def:fair}.
From now on, we shall mostly work with undiscounted quantities. We therefore denote by $V\in S^0\cX$ the undiscounted value process of a generic extended portfolio and by $V^*:=S^0\bar{V}^*$ the undiscounted value of the num\'eraire portfolio.

\begin{defn}	\label{def:fair}
A process $V\in S^0\cX$ is said to be {\em fair} if $V/V^*$ is a true martingale.
For $T>0$, if $H$ is an $\cF_T$-measurable random variable and there exists a fair process $V\in S^0\cX$ satisfying $V_T=H$ a.s., then the {\em real-world price} $\pi_t(H)$ of  $H$ at time $t$ is given by
\be	\label{eq:RW_price}
\pi_t(H) = V^*_t \, E\left[\frac{H}{V^*_T}\bigg|\cF_t\right],
\qquad\text{ for all }t\in[0,T].
\ee
\end{defn}

We shall say that a payoff $H$ is {\em fairly priced} if its market value is given by formula \eqref{eq:RW_price}.
The quantity $\pi_t(H)/V^*_t$ represents the {\em benchmarked price} of $H$ and, as a consequence of \eqref{eq:RW_price}, is a martingale. 
More generally, if $S$ denotes the price process of a generic traded asset or portfolio, the corresponding {\em benchmarked price} is given by $\hat{S}:=S/V^*$. Since $S^0/V^*=1/\bar{V}^*$ is a LMD, benchmarked prices are local martingales (and true martingales in the case of fairly priced assets).
In the following sections we shall denote by $\hat{P}(t,T):=P(t,T)/V^*_t$ the benchmarked price of a ZCB at time $t$ with maturity $T$. Since we assume that $P(\cdot,T)\in\cS$, for all $T>0$, market viability implies that benchmarked ZCB prices are local martingales.
We remark that the property that benchmarked prices are local martingales (and not only supermartingales) is specific to financial market models based on continuous asset price processes, as considered in the present paper.

\begin{rem}
[Real-world pricing of non-attainable payoffs]
\label{rem:log_price}
According to Definition \ref{def:fair}, the real-world price $\pi_t(H)$ coincides with the replication value of $H$. In the benchmark approach, the real-world pricing formula can also be extended to non-attainable claims. In this case, formula \eqref{eq:RW_price} corresponds to the marginal utility indifference price for a logarithmic utility function (see, e.g.,  \cite[Section 11.4]{PlatenHeath} and \cite[Proposition 4.7.1]{FontanaRunggaldier13}). This follows from the fact that $\bar{V}^*_T$ maximizes expected logarithmic utility at time $T$, provided that $E[\log(\bar{V}^*_T)]<+\infty$, for $T>0$.
Indeed, the first part of the proof of \cite[Proposition 4.3]{Becherer01} implies that, if $E[\log(\bar{V}^*_T)]<+\infty$, then it holds that $E[\log(\bar{V}^*_T)] \geq E[\log(\bar{V}_T)]$, for all $\bar{V}\in\mathcal{X}$ such that $\bar{V}_T>0$ a.s., with $E[\log(\bar{V}_T)]$ potentially taking the value $-\infty$.
\end{rem}

\subsection{Roll-over risk}	\label{sec:ROrisk}

As mentioned in the introduction, we consider a post-crisis financial market where default risk and funding-liquidity risk are possibly present. While default risk can be mitigated by considering fully collateralized transactions, funding-liquidity risk represents an important feature in the determination of interest rates even in the absence of counterparty risk. 

In line with \cite{BMSS23}, we model funding-liquidity risk as {\em roll-over risk}, consisting in the situation where an agent may no longer be able to access funding at the risk-free rate $r$, but only at a usually higher rate $\tilde{r}$, due for instance to insufficient liquidity in the money market. 
We denote by $\varphi$ the {\em funding-liquidity spread} process and let $\tilde{r}:=r+\varphi$. While the presence of roll-over risk corresponds to $\varphi\geq0$, we do not a priori exclude possible negative values of $\varphi$, that would correspond to situations of excess of liquidity in the money market.
Assuming that $\int_0^T|\varphi_t|\ud t<+\infty$ a.s., for all $T>0$, we define the {\em roll-over-risk-adjusted borrowing account} process by $\tilde{S}^0:=\exp(\int_0^{\cdot}\tilde{r}_t\ud t)$.

\begin{rem}[On the account process $\tilde{S}^0$]
\label{rem:funding_account}
It is important to note that $\tilde{S}^0$ does not belong to the assets $\cS$ available for trading. Indeed, the possibility of borrowing/lending at distinct risk-free rates $\tilde{r}$ and $r$ would give rise to obvious arbitrage possibilities. In line with \cite{BMSS23}, the process $\tilde{S}^0$ is only introduced as a modelling tool accounting for roll-over risk in term rates.
\end{rem}

According to \cite{BMSS23}, we define by $A(t,T)$ the market value at time $t$ of $\tilde{S}^0_T/\tilde{S}^0_t$, for $T\geq t$. In other words, $A(t,T)$ represents the value at time $t$ of the repayment at $T$ of a continuously rolled-over loan over $[t,T]$, in the presence of funding-liquidity risk. Observe that this definition implies that $A(T,T)=1$, for all $T>0$.
For every $T>0$, we assume that the process $(A(t,T))_{t\in[0,T]}$ is continuous and non-negative. For every $T>0$, we assume that the process $(A(t,T))_{t\in[0,T]}$ is continuous and nonnegative.
In \cite{BMSS23}, the quantity $A(t,T)$ is determined by risk-neutral valuation. In our context, we assume the validity of the following weaker condition.\footnote{In \cite{BMSS23}, in the absence of credit risk, $A(t,T)$ is defined as the discounted risk-neutral expectation of $\tilde{S}^0_T/\tilde{S}^0_t$, see \cite[equation (3.10)]{BMSS23}. Assumption \ref{ass:funding_account} represents a natural generalization of this definition to our context.}

\begin{ass}	\label{ass:funding_account}
For every $T>0$, the process $(A(t,T)\tilde{S}^0_t/V^*_t)_{t\in[0,T]}$ is a  local martingale.
\end{ass}

In particular, Assumption \ref{ass:funding_account} implies that the inclusion of a security with value process $A(\cdot,T)\tilde{S}^0$ in the financial market considered in Section \ref{sec:market} does not alter market viability and is consistent with the num\'eraire portfolio $V^*$.
The fact that adding a security whose price process is already a local martingale when denominated in units of $V^*$ leaves invariant the num\'eraire portfolio goes back to \cite{FilipovicPlaten09}.
By Assumption \ref{ass:funding_account}, the process $(A(t,T)\tilde{S}^0_t/V^*_t)_{t\in[0,T]}$ is a non-negative local martingale and, hence, a supermartingale by Fatou's lemma. Recalling that $A(T,T)=1$, it then follows that
\be	\label{eq:AtT}
A(t,T) \geq \frac{V^*_t}{\tilde{S}^0_t}\,E\biggl[\frac{\tilde{S}^0_T}{V^*_T}\bigg|\cF_t\biggr],
\ee
with equality holding if the roll-over-risk-adjusted borrowing account is fairly priced. The presence of roll-over risk corresponds to the situation where $A(t,T)>1$. In particular, this is the case if the process $\tilde{S}^0/V^*$ is a submartingale. As shown in the following lemma, the latter property always holds in a local sense whenever the funding-liquidity spread is non-negative.

\begin{lem}	\label{lem:ROrisk}
If $\varphi\geq0$, the process $\tilde{S}^0/V^*$ is a local submartingale.
Moreover, $\tilde{S}^0/V^*$ is a local martingale if and only if $\varphi=0$ (up to an evanescent set).
\end{lem}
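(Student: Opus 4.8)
The plan is to exhibit the canonical semimartingale decomposition of $\tilde{S}^0/V^*$, by factoring it as a local martingale times an absolutely continuous process. First I would note that, since $\tilde{r}=r+\varphi$,
\be
\frac{\tilde{S}^0_t}{V^*_t}=Y_t\,G_t,\qquad\text{where }\ Y:=\frac{S^0}{V^*}=\frac{1}{\bar{V}^*}\ \text{ and }\ G_t:=\exp\Bigl(\int_0^t\varphi_s\ud s\Bigr).
\ee
The process $G$ is well defined by the standing integrability assumption on $\varphi$, and is continuous, strictly positive, of finite variation, with $G_0=1$ and $\ud G_t=\varphi_t G_t\ud t$. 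The process $Y$ is an LMD by Theorem \ref{thm:NUPBR}, and since the $S^i$ and $S^0$ have continuous paths and $\bar{V}^*\in\cX$ is by construction a continuous, strictly positive semimartingale, $Y$ is in fact a continuous, strictly positive local martingale with $Y_0=1$.

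Next I would apply integration by parts. As $G$ has finite variation, $[Y,G]\equiv0$, whence
\be
\ud\Bigl(\frac{\tilde{S}^0_t}{V^*_t}\Bigr)=G_t\ud Y_t+Y_tG_t\varphi_t\ud t=G_t\ud Y_t+\frac{\tilde{S}^0_t}{V^*_t}\,\varphi_t\ud t.
\ee
Since $G$ is continuous and adapted, hence locally bounded, $\int_0^{\cdot}G_s\ud Y_s$ is a continuous local martingale; and since $\tilde{S}^0/V^*>0$, the process $\int_0^{\cdot}(\tilde{S}^0_s/V^*_s)\varphi_s\ud s$ is continuous, adapted, of finite variation, and non-decreasing whenever $\varphi\geq0$. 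The displayed identity is therefore the canonical decomposition of the special semimartingale $\tilde{S}^0/V^*$ into a continuous local martingale plus a continuous, predictable, non-decreasing finite-variation part, and a routine localization argument then yields that $\tilde{S}^0/V^*$ is a local submartingale. This gives the first claim.

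For the second claim, the ``if'' direction is immediate: $\varphi=0$ forces $\tilde{S}^0=S^0$, so $\tilde{S}^0/V^*=1/\bar{V}^*$ is an LMD and hence a local martingale. For the ``only if'' direction, suppose $\tilde{S}^0/V^*$ is a local martingale; subtracting from the identity above the local martingale $\int_0^{\cdot}G_s\ud Y_s$, the process $t\mapsto\int_0^t(\tilde{S}^0_s/V^*_s)\varphi_s\ud s$ is a continuous finite-variation local martingale vanishing at $0$, hence indistinguishable from the zero process. As $\tilde{S}^0_s/V^*_s>0$ for all $s$, $\bP$-a.s., this forces $\varphi_s=0$ for Lebesgue-a.e.\ $s$, $\bP$-a.s.; since $\varphi$ enters the model only through the time-integrals defining $\tilde{r}$ and $\tilde{S}^0$, this is precisely the meaning of ``$\varphi=0$ up to an evanescent set'' in the present context.

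I do not expect a genuine obstacle here. The two points that require care are the continuity and strict positivity of $Y$ — essential both for the covariation term to vanish in the integration by parts and for the above identity to be the \emph{canonical} decomposition, on which the ``only if'' argument relies — and the standard localization needed to pass from the pathwise identity to the (local) sub/martingale property; the ``evanescent set'' qualifier should be read modulo $\ud t\otimes\ud\bP$-null sets, the only level at which $\varphi$ is identifiable.
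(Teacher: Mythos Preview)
Your proof is correct and follows essentially the same approach as the paper: both rely on the factorization $\tilde{S}^0/V^*=(S^0/V^*)\exp(\int_0^{\cdot}\varphi_s\ud s)$, where the first factor is a local martingale (being an LMD) and the second is increasing when $\varphi\ge0$. You spell out the integration-by-parts step and the canonical decomposition explicitly, whereas the paper leaves these implicit; your added care about the ``evanescent set'' qualifier (noting that only $\ud t\otimes\ud\bP$-a.e.\ vanishing is recoverable from the integral condition) is a valid refinement of the paper's slightly loose phrasing.
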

\begin{proof}
The claim follows from the fact that $S^0/V^*$ is a LMD (see Theorem \ref{thm:NUPBR}) and, therefore, a local martingale, together with the fact that the process $\exp(\int_0^{\cdot}\varphi_t\ud t)$ is increasing if $\varphi\geq0$. 
\end{proof}

\subsection{Spot term rates}	\label{sec:spot}

In interest rate markets, an investor can avoid funding-liquidity risk by borrowing/lending money at a fixed term rate, instead of rolling-over the loan until the end of the term.
Therefore, roll-over risk should be implicitly taken into account in the determination of fair term rates. This perspective underlies the approach of \cite{BMSS23}, which we are going to follow in the present subsection.
For $0\leq t\leq T<+\infty$, we denote by $L(t,T)$ the {\em spot term rate}, i.e., the rate fixed at time $t$ for borrowing/lending one unit of money at $t$ with a repayment of $1+(T-t)L(t,T)$ at time $T$. 

Following \cite{BMSS23}, we determine term rates by comparing the following two possibilities:
\begin{enumerate}
\item[(i)] at time $t$ borrow one unit of money and continuously roll-over the loan until time $T$;
\item[(ii)] at time $t$ borrow one unit of money at the term rate $L(t,T)$ with repayment at time $T$.
\end{enumerate}
In equilibrium, the two alternatives (i) and (ii) should have the same present value at time $t$. Arguing as in \cite[Section 3.2]{BMSS23}, this enables us to determine the term rate $L(t,T)$. For (i), the present value at $t$ is simply given by $A(t,T)$, as explained in Section \ref{sec:ROrisk}. For (ii), since $L(t,T)$ is fixed at $t$, the present value at $t$ of the repayment at $T$ is given by $(1+(T-t)L(t,T))P(t,T)$. Therefore, by equating the two present values we obtain that
\[
L(t,T) = \frac{1}{T-t}\left(\frac{A(t,T)}{P(t,T)}-1\right).
\]

In the presence of roll-over risk, the quantity $A(t,T)$ coincides with the multiplicative {\em spot spread} $S(t,T)$ between term rates and simple forward rates, defined as follows:
\be	\label{eq:spot_spread}
S(t,T) := \frac{1+(T-t)L(t,T)}{1+(T-t)F(t,T)} = A(t,T),
\ee
where $F(t,T):=(1/P(t,T)-1)/(T-t)$ is the simple risk-free forward rate at $t$ for the time period $[t,T]$, for $0\leq t\leq T<+\infty$.
In typical market situations, we expect that $S(t,T)>1$. In view of \eqref{eq:spot_spread}, this happens if and only if $A(t,T)>1$, which is indicative of the presence of funding-liquidity risk, as explained in Section \ref{sec:ROrisk}. The spot spread can therefore be regarded as a term premium paid by the borrower in order to hedge against roll-over risk.

\begin{rem}[Relation to multi-curve models]	
As mentioned in Section \ref{sec:intro}, interest rate models where term rates $L(t,T)$ are distinct from risk-free forward rates $F(t,T)$ are called {\em multi-curve} models. In the present setting, the multi-curve structure arises intrinsically from roll-over risk, since the latter is responsible for the existence of multiplicative spot spreads.
\end{rem}

\subsection{Forward term rates}	\label{sec:fwd}

Spot term rates are not directly related to traded securities. In interest rate markets, the fundamental contract referencing term rates is a {\em single-period swap} (SPS), corresponding to the classical forward rate agreement (see, e.g., \cite[Exercise 19.1]{Bjork}). An SPS delivers the payoff $\delta(L(T,T+\delta)-K)$ at maturity $T+\delta$, where $K$ is a fixed rate and $\delta$ represents a fixed tenor.
In particular, SPSs represent the building blocks of interest rate swaps, which constitute the most important interest rate derivatives referencing term rates.

We denote by $\Pi(t;T,\delta,R)$ the market value at $t\in[0,T]$ of an SPS referencing $L(T,T+\delta)$, with fixed rate $R\in\R$ and tenor $\delta\in\Delta$, where $\Delta$ is a finite collection of tenors. 
For each $T>0$, $\delta\in\Delta$ and $R\in\R$, we assume that the process $(\Pi(t;T,\delta,R))_{t\in[0,T]}$ is continuous. Moreover, we assume that market viability holds when the financial market includes SPSs for all possible maturities, rates and tenors. This corresponds to requiring the validity of the following assumption.

\begin{ass}	\label{ass:SPS}
The process $(\Pi(t;T,\delta,R)/V^*_t)_{t\in[0,T]}$ is a local martingale, for every $T>0$, $R\in\R$ and $\delta\in\Delta$.
\end{ass}

Assumption \ref{ass:SPS} holds if $V^*$ is the num\'eraire portfolio for a financial market that includes all ZCBs and SPSs, as described in Section \ref{sec:market}. If Assumption \ref{ass:SPS} holds as a true martingale, then SPSs are fairly priced by $V^*$ (see Definition \ref{def:fair}). However, the local martingale requirement of Assumption \ref{ass:SPS} will suffice for our purposes.
We define forward term rates as follows. 

\begin{defn}	\label{def:fwd_rate}
For $0\leq t\leq T<+\infty$ and $\delta\in\Delta$, the {\em forward term rate} $L_t(T,T+\delta)$ is defined as the rate $R$ such that $\Pi(t;T,\delta,R)=0$. In particular, it holds that $L_T(T,T+\delta)=L(T,T+\delta)$.
\end{defn}

We furthermore assume that the function $R\mapsto\Pi(t;T,\delta,R)$ is affine, for all $0\leq t\leq T<+\infty$ and $\delta\in\Delta$. This assumption is standard in the literature when considering fully collateralized transactions (as in our setting) and is also consistent with the real-world pricing formula \eqref{eq:RW_price}. 
This assumption leads to the following standard representation of the market value of an SPS:
\be	\label{eq:SPS_price}
\Pi(t;T,\delta,R) = \delta\bigl(L_t(T,T+\delta)-R\bigr)P(t,T+\delta).
\ee

Market viability and Assumption \ref{ass:SPS} lead to the following property of forward term rates.\footnote{In Lemma \ref{lem:fwd_rate} we remark that the true martingale property holds if ZCBs and SPSs are fairly priced by $V^*$.}

\begin{lem}	\label{lem:fwd_rate}
Suppose that Assumption \ref{ass:SPS} holds. Then, for all $T>0$, $\delta\in\Delta$, the process 
\[
L_t(T,T+\delta)\hat{P}(t,T+\delta),
\qquad t\in[0,T],
\]
is a local martingale.
\end{lem}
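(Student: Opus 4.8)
The plan is to express the process $L_t(T,T+\delta)\hat P(t,T+\delta)$ directly in terms of quantities that are already known to be benchmarked local martingales, and then conclude by a linearity argument. The natural starting point is the pricing formula \eqref{eq:SPS_price}, which for the fixed rate $R=0$ reads $\Pi(t;T,\delta,0)=\delta\,L_t(T,T+\delta)P(t,T+\delta)$. Dividing by $V^*_t$ gives
\[
\frac{\Pi(t;T,\delta,0)}{V^*_t} = \delta\,L_t(T,T+\delta)\,\frac{P(t,T+\delta)}{V^*_t} = \delta\,L_t(T,T+\delta)\,\hat P(t,T+\delta),
\]
so that $L_t(T,T+\delta)\hat P(t,T+\delta) = \tfrac1\delta\,\Pi(t;T,\delta,0)/V^*_t$. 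Hence it suffices to show that $(\Pi(t;T,\delta,0)/V^*_t)_{t\in[0,T]}$ is a local martingale, and this is exactly the content of Assumption \ref{ass:SPS} applied with $R=0$.

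First I would invoke the affine representation \eqref{eq:SPS_price} to rewrite the claimed process as the benchmarked value of an SPS with zero fixed rate. Then I would apply Assumption \ref{ass:SPS} with $R=0$ to deduce that this benchmarked value is a local martingale, and finally note that scaling by the deterministic constant $1/\delta$ preserves the local martingale property. That completes the argument. If one wants to be slightly more self-contained and avoid relying on the affine assumption, an alternative is to write $\delta L_t(T,T+\delta)P(t,T+\delta) = \Pi(t;T,\delta,0) = \Pi(t;T,\delta,R) + \delta R\,P(t,T+\delta)$ for any convenient $R$, divide by $V^*_t$, and use that both $\Pi(\cdot;T,\delta,R)/V^*$ (by Assumption \ref{ass:SPS}) and $\hat P(\cdot,T+\delta)=P(\cdot,T+\delta)/V^*$ (by market viability, since $P(\cdot,T+\delta)\in\cS$) are local martingales; a linear combination of local martingales is again a local martingale.

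There is essentially no analytic obstacle here: the statement is a direct bookkeeping consequence of the SPS pricing formula together with the standing market viability assumption and Assumption \ref{ass:SPS}. The only point requiring a little care is making sure that the rate $R$ one plugs in (namely $R=0$) is an admissible fixed rate, which it is since Assumption \ref{ass:SPS} is stated for every $R\in\R$. I would also note, as the footnote to the lemma indicates, that if Assumption \ref{ass:SPS} holds in the stronger form of a true martingale and ZCBs are fairly priced by $V^*$, then the same computation upgrades the conclusion to a true martingale statement, since in that case both $\Pi(\cdot;T,\delta,0)/V^*$ and $\hat P(\cdot,T+\delta)$ are true martingales and the representation above identifies $L_t(T,T+\delta)\hat P(t,T+\delta)$ with $\tfrac1\delta$ times a true martingale.
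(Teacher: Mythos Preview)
Your proof is correct. Your primary route---taking $R=0$ in \eqref{eq:SPS_price} so that $L_t(T,T+\delta)\hat P(t,T+\delta)=\tfrac{1}{\delta}\Pi(t;T,\delta,0)/V^*_t$ and invoking Assumption~\ref{ass:SPS} directly---is in fact slightly more economical than the paper's own argument, which corresponds to the alternative you sketch: the paper uses the representation \eqref{eq:SPS_price} for a generic $R$, notes that $\hat P(\cdot,T+\delta)$ is a local martingale by market viability (since $P(\cdot,T+\delta)\in\cS$), and then concludes that $\delta L_t(T,T+\delta)\hat P(t,T+\delta)=\Pi(t;T,\delta,R)/V^*_t+\delta R\,\hat P(t,T+\delta)$ is a local martingale as a sum of two local martingales. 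Your $R=0$ shortcut bypasses the ZCB step entirely, at the modest cost of relying on the specific choice $R=0$ being covered by Assumption~\ref{ass:SPS} (which it is, as you note). Both arguments are essentially the same idea; yours just picks the cleanest instance.
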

\begin{proof}
As stated in Section \ref{sec:market}, market viability holds for the set of assets $\cS$ that includes all ZCBs. Hence, since $S^0/V^*$ is a LMD, the process $(P(t,T+\delta)/V^*_t)_{t\in[0,T+\delta]}$ is a local martingale. Making use of this fact, the result follows directly from Assumption \ref{ass:SPS} and formula \eqref{eq:SPS_price}.
\end{proof}

In post-crisis interest rate markets, it is often useful for modelling purposes to consider multiplicative {\em forward spreads}, rather than modelling directly forward term rates (see, e.g., \cite{CFG16}).
For $0\leq t\leq T<+\infty$ and $\delta\in\Delta$, the multiplicative forward spread $S_t(T,T+\delta)$ is defined as
\be	\label{eq:fwd_spread}
S_t(T,T+\delta) := \frac{1+\delta L_t(T,T+\delta)}{1+\delta F_t(T,T+\delta)},
\ee
where $F_t(T,T+\delta):=(P(t,T)/P(t,T+\delta)-1)/\delta$ is the simple forward rate at $t$ for the time period $[T,T+\delta]$. 
Note that, by definition, it holds that $S_T(T,T+\delta)=S(T,T+\delta)$, where the latter quantity is the multiplicative spot spread introduced in \eqref{eq:spot_spread}.

\begin{cor}	\label{cor:fwd_spread}
Suppose that Assumption \ref{ass:SPS} holds. Then, for all $T>0$, $\delta\in\Delta$, the process 
\[
S_t(T,T+\delta)\hat{P}(t,T),
\qquad t\in[0,T],
\]
is a local martingale. 
\end{cor}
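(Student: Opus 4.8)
The plan is to reduce the statement to Lemma~\ref{lem:fwd_rate} by a direct algebraic manipulation of the multiplicative forward spread. First I would use the definition of the simple forward rate to note that, for $0\leq t\leq T$ and $\delta\in\Delta$,
\[
1+\delta F_t(T,T+\delta) = \frac{P(t,T)}{P(t,T+\delta)},
\]
so that, by \eqref{eq:fwd_spread},
\[
S_t(T,T+\delta) = \bigl(1+\delta L_t(T,T+\delta)\bigr)\frac{P(t,T+\delta)}{P(t,T)}.
\]
Multiplying by $\hat{P}(t,T)=P(t,T)/V^*_t$ and cancelling $P(t,T)$ yields the key identity
\[
S_t(T,T+\delta)\,\hat{P}(t,T) = \bigl(1+\delta L_t(T,T+\delta)\bigr)\hat{P}(t,T+\delta) = \hat{P}(t,T+\delta) + \delta\,L_t(T,T+\delta)\hat{P}(t,T+\delta),
\]
valid for all $t\in[0,T]$; note that, since $T\leq T+\delta$, the right-hand side is well-defined on $[0,T]$.

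Next I would argue that each of the two summands is a local martingale on $[0,T]$. The process $\hat{P}(\cdot,T+\delta)$ is a local martingale because $P(\cdot,T+\delta)\in\cS$ and $S^0/V^*=1/\bar{V}^*$ is a local martingale deflator by Theorem~\ref{thm:NUPBR}, so benchmarked ZCB prices are local martingales (as already observed in Section~\ref{sec:market}); restricting it to $[0,T]$ preserves this property. The process $\delta\,L_t(T,T+\delta)\hat{P}(t,T+\delta)$, $t\in[0,T]$, is a local martingale precisely by Lemma~\ref{lem:fwd_rate}, which applies since Assumption~\ref{ass:SPS} is in force. Since a finite sum of local martingales is again a local martingale, the claim follows.

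There is essentially no serious obstacle here; the proof is a one-line computation followed by an appeal to Lemma~\ref{lem:fwd_rate}. The only point requiring mild care is bookkeeping with the time intervals: the spread $S_t(T,T+\delta)$ and the identity above only make sense for $t\in[0,T]$, whereas $\hat{P}(\cdot,T+\delta)$ and the process of Lemma~\ref{lem:fwd_rate} are local martingales on intervals containing $[0,T]$, so the restriction to $[0,T]$ is harmless and no additional argument is needed.
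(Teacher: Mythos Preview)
Your proof is correct and follows exactly the same route as the paper: you derive the identity $S_t(T,T+\delta)\hat{P}(t,T)=\hat{P}(t,T+\delta)+\delta L_t(T,T+\delta)\hat{P}(t,T+\delta)$ from the definition of the forward spread, and then conclude by combining the local martingale property of benchmarked ZCB prices with Lemma~\ref{lem:fwd_rate}. The only difference is that you spell out the algebra and the interval bookkeeping a bit more explicitly than the paper does.
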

\begin{proof}
Making use of definition \eqref{eq:fwd_spread} of the multiplicative forward spread, we have that
\[
S_t(T,T+\delta)\hat{P}(t,T)
= \hat{P}(t,T+\delta) + \delta L_t(T,T+\delta)\hat{P}(t,T+\delta).
\]
As a consequence of market viability, benchmarked ZCB prices are local martingales. Therefore, the local martingale property of $(S_t(T,T+\delta)\hat{P}(t,T))_{t\in[0,T]}$ follows from Lemma \ref{lem:fwd_rate}.
\end{proof}

\begin{rem}[Martingale properties under forward measures]
(1) In classical interest rate models based on the risk-neutral approach, the forward term rate $(L_t(T,T+\delta))_{t\in[0,T]}$ is a martingale under the forward measure $\bQ^{T+\delta}$ (see \cite[Lemma 23.4]{Bjork}), while  $(S_t(T,T+\delta))_{t\in[0,T]}$ is a martingale under the forward measure $\bQ^T$ (see \cite[Lemma 3.11]{CFG16}). 
Lemma \ref{lem:fwd_rate} and Corollary \ref{cor:fwd_spread} show that generalized versions of these properties hold in our context.

(2) In our setup, even if a risk-neutral probability is not assumed to exist, forward measures can still be constructed if ZCBs are fairly priced by the GOP (i.e., benchmarked ZCB prices are true martingales and not only local martingales).
In this case, for every $T>0$, the forward measure $\bQ^T$ can be defined by $\ud\bQ^T/\ud\bP = 1/(V^*_TP(0,T))$ and has the property that every value process is a $\bQ^T$-local martingale on $[0,T]$ when denominated in units of $P(\cdot,T)$.
\end{rem}

\section{Stochastic control representations in a Markovian setting}
\label{sec:control}

In this section, we specialize the general setting of Section \ref{sec:setup} to a financial market driven by a finite-dimensional Markov factor process, with the main goal of representing spot and forward spreads as value functions of stochastic optimal control problems, in the spirit of \cite{GR13}. 
In Section \ref{sec:markov}, we derive explicit dynamics for the GOP, determined by a Markov factor process. Section \ref{sec:preliminary} contains some preparatory results, including the PDEs that spot and forward spreads must satisfy in a Markovian setting as a consequence of market viability (more precisely, Assumptions \ref{ass:funding_account} and \ref{ass:SPS}). By relying on these results, stochastic control representations of spot/forward spreads are derived in Section \ref{sec:representations}, where we also discuss their economic interpretation.

\subsection{A Markovian setting}	\label{sec:markov}

Let us consider a probability space $(\Omega,\cF,\bP)$, endowed with the right-continuous filtration $\bF=(\cF_t)_{t\geq0}$, consisting of the natural filtration of a $d$-dimensional Brownian motion $W=(W_t)_{t\geq0}$ augmented with the $\bP$-null sets.
We denote by $\Lloc$ the set of $\R^d$-valued predictable processes $h=(h_t)_{t\geq0}$ such that $\int_0^T\|h_t\|^2\ud t<+\infty$ a.s., for all $T>0$.
We consider the generic financial market described in Section \ref{sec:market}, assuming that the family of assets $\cS$ includes ZCBs and SPSs for all maturities $T>0$, together with possible additional assets. 

The standing assumption of market viability in the sense of condition \eqref{eq:NUPBR} is assumed to be in force, thereby ensuring the existence of the num\'eraire portfolio (or, equivalently, the GOP). As in Sections \ref{sec:ROrisk}--\ref{sec:fwd}, we denote by $V^*$ the undiscounted value process of the GOP. By Theorem \ref{thm:NUPBR}, the process $Y:=S^0/V^*$ is a LMD.
Since $Y$ is a strictly positive local martingale with $Y_0=1$, martingale representation ensures the existence of a process $\theta\in\Lloc$ such that
\[
\ud Y_t = -Y_t\,\theta_t\ud W_t,
\qquad Y_0=1.
\] 
A straightforward application of It\^o's formula yields the following dynamics of the GOP:
\be	\label{eq:sde_GOP}
\ud V^*_t = V^*_t\bigl(r_t+\|\theta_t\|^2\bigr)\ud t + V^*_t \, \theta_t\ud W_t,
\qquad V^*_0=1.
\ee
In the following, we call {\em market price of risk} the process $\theta$. This is due to the fact that if one starts by postulating an It\^o process model for a finite family of assets and computes the GOP, then one obtains dynamics of the form \eqref{eq:sde_GOP}, where the process $\theta$ coincides with the market price of risk for the original family of assets, see e.g.  \cite[Chapter 10]{PlatenHeath} and also Section \ref{sec:RS} below.
In that setting, it can also be shown that market viability holds if and only if $\theta\in\Lloc$, with an analogous condition being true in the infinite-dimensional case (see \cite[Theorem 3.3]{Kardaras22}).

Let $X$ be a diffusion process taking values in a state space $D\subseteq\R^n$ and satisfying
\be	\label{eq:sde_X}
\ud X_t = f(t,X_t)\ud t + g(t,X_t)\ud W_t,
\qquad X_0=x_0\in D,
\ee
where the functions $f:\R_+\times D\to \R^n$ and $g:\R_+\times D\to\R^{n\times d}$ are sufficiently smooth to ensure the existence of a unique strong solution to \eqref{eq:sde_X} with the Markov property. We furthermore assume that, for each $t>0$, the distribution of $X_t$ has full support on $D$. 
We interpret the Markov process $X$ as a vector of economic factors that determine the market environment. 
In view of this interpretation, we introduce the following natural assumption.

\begin{ass}	\label{ass:Markov}
It holds that
\[
r_t = r(t,X_t),
\qquad \theta_t=\theta(t,X_t),
\qquad \varphi_t=\varphi(t,X_t),
\qquad\;\text{ for all }t\geq0,
\]
where $r:=\R_+\times D\to\R$, $\theta:\R_+\times D\to\R^d$ and $\varphi:\R_+\times D\to\R$.
\end{ass}

In the following, we shall derive several results by exploiting the fact that benchmarked price processes are local martingales (see Section \ref{sec:market}). However, to obtain a PDE characterization we will need a Markovian structure of benchmarked prices and this can only be true if the GOP itself has a Markovian structure. To this effect, we state the following lemma. For $f:\R_+\times D\to\R$ we denote respectively by $\nabla_x f$ and $Hf$ the gradient and the Hessian of $f$ with respect to $x$.

\begin{lem}	\label{lem:GOP}
Suppose that Assumption \ref{ass:Markov} holds. Let $v^*:\R_+\times D\to(0,+\infty)$ be a function of class $\cC^{1,2}$. Then $v^*(t,X_t)=V^*_t$ holds for all $t\geq0$ if and only if the function $v^*$ satisfies $v^*(0,x_0)=1$ and the following two conditions hold, for all $(t,x)\in\R_+\times D$:
\be	\label{eq:GOP_cond1}
g(t,x)^{\top}\nabla_x v^*(t,x) = v^*(t,x)\theta(t,x),
\ee
\be	\label{eq:GOP_cond2}\ba
&\partial_t v^*(t,x)+\nabla^{\top}_x v^*(t,x)\bigl(f(t,x)-g(t,x)\theta(t,x)\bigr)\\
&\quad
+\frac{1}{2}\tr\bigl(g(t,x)^{\top}Hv^*(t,x)\,g(t,x)\bigr)-v^*(t,x)r(t,x) = 0.
\ea\ee
\end{lem}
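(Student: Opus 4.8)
The plan is to apply Itô's formula to the process $v^*(t,X_t)$ using the dynamics \eqref{eq:sde_X} and then compare the result, term by term, with the known GOP dynamics \eqref{eq:sde_GOP}. First I would write down the Itô expansion
\[
\ud v^*(t,X_t) = \Bigl(\partial_t v^* + \nabla_x^\top v^*\, f + \tfrac12\tr(g^\top Hv^*\, g)\Bigr)(t,X_t)\ud t + \nabla_x^\top v^*(t,X_t)\, g(t,X_t)\ud W_t,
\]
where all coefficients are evaluated at $(t,X_t)$. On the other hand, if $v^*(t,X_t)=V^*_t$ for all $t\ge0$, then the diffusion and drift coefficients of this semimartingale must coincide (up to indistinguishability) with those appearing in \eqref{eq:sde_GOP}, namely the diffusion coefficient $V^*_t\,\theta_t = v^*(t,X_t)\theta(t,X_t)$ and the drift $V^*_t(r_t+\|\theta_t\|^2) = v^*(t,X_t)(r(t,X_t)+\|\theta(t,X_t)\|^2)$, using Assumption \ref{ass:Markov}. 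Matching the martingale parts gives $\nabla_x^\top v^*(t,X_t)g(t,X_t) = v^*(t,X_t)\theta(t,X_t)^\top$ for all $t$, and matching the finite-variation parts gives the drift identity; since by assumption $X_t$ has full support on $D$ for each $t>0$ and the coefficient functions are continuous, these identities pass from "holds at $X_t$ for a.e.\ path, all $t$" to "holds for all $(t,x)\in\R_+\times D$", which yields exactly \eqref{eq:GOP_cond1} (after transposing) and, upon substituting $g^\top\nabla_x v^* = v^*\theta$ to rewrite $\|\theta\|^2 v^* = \nabla_x^\top v^*\, g\,\theta$, the drift condition \eqref{eq:GOP_cond2}. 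The initial condition $v^*(0,x_0)=1$ is immediate from $V^*_0=1$ and $X_0=x_0$.

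For the converse, I would assume $v^*$ is $\cC^{1,2}$, satisfies $v^*(0,x_0)=1$, \eqref{eq:GOP_cond1} and \eqref{eq:GOP_cond2}, and define $U_t := v^*(t,X_t)$. Applying Itô's formula and substituting the two conditions (the first to replace the diffusion coefficient, the second to replace the drift), I obtain that $U$ solves the same linear SDE
\[
\ud U_t = U_t\bigl(r(t,X_t)+\|\theta(t,X_t)\|^2\bigr)\ud t + U_t\,\theta(t,X_t)^\top\ud W_t,\qquad U_0 = 1,
\]
as $V^*$; since this SDE has coefficients that are (pathwise) progressively measurable and linear in the solution, with $\theta\in\Lloc$ by market viability, strong uniqueness holds and therefore $U\equiv V^*$, i.e.\ $v^*(t,X_t)=V^*_t$ for all $t\ge0$.

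The main obstacle, and the one point requiring genuine care, is the passage from the pathwise identities (valid $\bP$-a.s.\ along the trajectory $(X_t)$) to the pointwise identities on all of $\R_+\times D$. Here the full-support assumption on the law of $X_t$ and the continuity of $r,\theta,\varphi,f,g,v^*$ and its derivatives are essential: continuity lets one upgrade "equal on a dense set" to "equal everywhere" in the space variable, and right-continuity in $t$ handles the time variable. A secondary technical point is ensuring that the stochastic integral terms are genuinely well-defined (i.e.\ $\nabla_x^\top v^*(\cdot,X_\cdot)g(\cdot,X_\cdot)\in\Lloc$), which follows on the set where the conditions hold since it then equals $v^*(\cdot,X_\cdot)\theta(\cdot,X_\cdot)$ and $\theta\in\Lloc$; otherwise it is part of the regularity assumptions on $v^*$ and the standing Markovian hypotheses. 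Note that the funding-liquidity spread $\varphi$ does not enter this lemma, so Assumption \ref{ass:Markov} is used only through the representations of $r$ and $\theta$.
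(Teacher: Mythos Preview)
Your proposal is correct and follows exactly the same approach as the paper: apply It\^o's formula to $v^*(t,X_t)$, match drift and diffusion coefficients against \eqref{eq:sde_GOP} (using Assumption \ref{ass:Markov}), and invoke uniqueness of the solution to the linear SDE \eqref{eq:sde_GOP} for the converse. The paper's proof is a two-line sketch; your write-up fills in the details the paper omits, in particular the use of the full-support assumption on $X_t$ to upgrade the pathwise identities to pointwise ones on $\R_+\times D$.
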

\begin{proof}
Making use of Assumption \ref{ass:Markov} and applying It\^o's formula to the function $v^*$, it can easily be seen that the process $(v^*(t,X_t))_{t\geq0}$ satisfies \eqref{eq:sde_GOP} if and only if conditions \eqref{eq:GOP_cond1}-\eqref{eq:GOP_cond2} are satisfied. The result follows from the fact that the SDE \eqref{eq:sde_GOP} admits a unique solution $V^*$.
\end{proof}

\begin{rem}	\label{rem:inf_dim}
For simplicity of presentation, we restrict our attention to a $d$-dimensional Brownian motion as the driving source of randomness. However, in financial market models containing infinitely many assets, infinite-dimensional driving processes are often considered. The present probabilistic setup can be generalized to a Wiener process $W$ taking values in a real separable Banach space (see, e.g., \cite[Chapter 4]{CarmonaTehranchi}) with no significant changes in the results of this section. 
In particular, the martingale representation theorem applicable in this case is provided by \cite[Theorem 4.1 and Remark 4.2]{CarmonaTehranchi}. 
In addition, under suitable conditions on $f$ and $g$, a Markov factor process $X$ with values in $D\subseteq\R^n$ can be defined as the unique strong solution to \eqref{eq:sde_X}, generalized to a driving Wiener process $W$ (see, e.g., \cite[Theorems 3.3 and 3.6]{GawareckiMandrekar}).
After these preliminaries, the PDE characterizations stated in Lemmata \ref{lem:GOP}, \ref{lem:PDE_spot} and \ref{lem:PDE_fwd} can be obtained analogously to the finite-dimensional case, since their proofs are essentially based on applications of It\^o's formula with respect to the finite-dimensional process $X$.
\end{rem}

\subsection{PDE characterization of spreads}	\label{sec:preliminary}

In order to represent spot/forward spreads as solutions to stochastic optimal control problems in a Markovian setting, we first need to obtain a PDE representation of spot/forward spreads. 
In the classical case of ZCB term structures considered in \cite{GR13}, the PDE correspond to the fundamental term structure equation. In our context, the PDEs for spot/forward spreads will be derived by relying on the market viability assumptions introduced in Section \ref{sec:setup}, in particular Assumptions \ref{ass:funding_account} and \ref{ass:SPS}.

In the remaining part of this section, we shall work under the following assumption, which ensures that the GOP has a Markovian structure (see Lemma \ref{lem:GOP}).

\begin{ass}	\label{ass:Markov_GOP}
There exists a function $v^*:\R_+\times D\to(0,+\infty)$ of class $\cC^{1,2}$ with $v^*(0,x_0)=1$ such that conditions \eqref{eq:GOP_cond1}-\eqref{eq:GOP_cond2} hold.
\end{ass}

We start by deriving the PDE associated to spot spreads. We recall from Section \ref{sec:spot} that $S(t,T)=A(t,T)$, for every $0\leq t\leq T<+\infty$, as shown in formula \eqref{eq:spot_spread}.

\begin{lem}	\label{lem:PDE_spot}
Suppose that Assumptions \ref{ass:funding_account}, \ref{ass:Markov} and \ref{ass:Markov_GOP} hold. 
For $T>0$, let $s^T:\R_+\times D\to\R$ be a function of class $\cC^{1,2}$. 
If
\be	\label{eq:spot_Markov}
S(t,T) = s^T(t,X_t),
\qquad\text{ for all }t\in[0,T],
\ee
then the function $s^T$ solves the following PDE, for all $(t,x)\in[0,T)\times D$:
\be	\label{eq:PDE_spot}	\ba
&\partial_t s^T(t,x)+\nabla^{\top}_xs^T(t,x)\left(f(t,x)-g(t,x)g(t,x)^{\top}\frac{\nabla_xv^*(t,x)}{v^*(t,x)}\right)	\\
&\; +\frac{1}{2}\tr\bigl(g(t,x)^{\top}Hs^T(t,x) \, g(t,x)\bigr)+\varphi(t,x)s^T(t,x)=0,
\ea	\ee
with terminal condition $s^T(T,x)=1$, for all $x\in D$.
\end{lem}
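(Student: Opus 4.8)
The plan is to turn the local martingale property of Assumption~\ref{ass:funding_account} into a PDE via It\^o's formula. By hypothesis \eqref{eq:spot_Markov} together with $S(t,T)=A(t,T)$ (formula \eqref{eq:spot_spread}), the process in Assumption~\ref{ass:funding_account} can be written as $M_t:=A(t,T)\tilde{S}^0_t/V^*_t=s^T(t,X_t)\,Z_t$, where $Z_t:=\tilde{S}^0_t/V^*_t$. First I would compute the dynamics of $Z$: since $\tilde{S}^0_t/S^0_t=\exp\bigl(\int_0^t\varphi_s\ud s\bigr)$ has finite variation and $Y:=S^0/V^*$ is the LMD of Theorem~\ref{thm:NUPBR}, which by the discussion in Section~\ref{sec:markov} satisfies $\ud Y_t=-Y_t\theta_t\ud W_t$, the integration-by-parts formula gives $\ud Z_t=Z_t\varphi_t\ud t-Z_t\theta_t\ud W_t$; under Assumption~\ref{ass:Markov} all coefficients are functions of $(t,X_t)$.

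Next I would apply It\^o's formula to $s^T(t,X_t)$ using the SDE \eqref{eq:sde_X}, and then integration by parts to the product $M_t=s^T(t,X_t)\,Z_t$, keeping track of the cross-variation term between the Brownian parts of $\ud s^T(t,X_t)$ and $\ud Z_t$. Collecting terms, the finite-variation part of $\ud M_t$ equals $Z_t\,\mathcal{L}s^T(t,X_t)\,\ud t$, where
\[
\mathcal{L}s^T:=\partial_t s^T+\nabla_x^{\top}s^T\,f+\tfrac12\tr\bigl(g^{\top}Hs^T\,g\bigr)+\varphi\,s^T-\nabla_x^{\top}s^T\,g\,\theta .
\]
Since $Z>0$ and $M$ is a continuous local martingale by Assumption~\ref{ass:funding_account}, its finite-variation part is a continuous local martingale of finite variation started at $0$, hence identically zero; therefore $\mathcal{L}s^T(t,X_t)=0$ for $\ud t\otimes\ud\bP$-almost every $(t,\omega)$ with $t<T$. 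Because $X_t$ has full support on $D$ for each $t>0$ and $(t,x)\mapsto\mathcal{L}s^T(t,x)$ is jointly continuous (the coefficients $f,g,\varphi,\theta$ and the derivatives of $s^T$ being continuous), this forces $\mathcal{L}s^T\equiv0$ on $[0,T)\times D$. Finally, condition \eqref{eq:GOP_cond1} of Lemma~\ref{lem:GOP}, available under Assumption~\ref{ass:Markov_GOP}, reads $g(t,x)\theta(t,x)=g(t,x)g(t,x)^{\top}\nabla_x v^*(t,x)/v^*(t,x)$, so that $\nabla_x^{\top}s^T\,g\,\theta=\nabla_x^{\top}s^T\,g\,g^{\top}\nabla_x v^*/v^*$; substituting this into $\mathcal{L}s^T=0$ gives exactly \eqref{eq:PDE_spot}. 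The terminal condition $s^T(T,\cdot)\equiv1$ on $D$ follows from $S(T,T)=A(T,T)=1$, the continuity of $s^T(T,\cdot)$ and the full support of $X_T$.

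All steps beyond the first are standard It\^o calculus, and I expect the only point needing real care to be the passage from ``$\mathcal{L}s^T$ vanishes along the trajectory of $X$'' to ``$\mathcal{L}s^T\equiv0$ pointwise on $[0,T)\times D$'': this is where the full-support assumption on the marginals of $X$ and the joint continuity of the coefficients are essential, and it also explains why the PDE is asserted on the open set $[0,T)\times D$ rather than up to $T$. As a side remark, keeping $Z=\tilde{S}^0/V^*$ as a single process with known semimartingale dynamics, rather than decomposing $1/V^*=1/v^*(t,X_t)$, isolates the role of \eqref{eq:GOP_cond1} and avoids any explicit use of the second GOP condition \eqref{eq:GOP_cond2}.
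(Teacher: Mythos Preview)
Your proof is correct and follows essentially the same route as the paper: compute the dynamics of $\tilde{S}^0/V^*$, apply It\^o's formula to $s^T(t,X_t)$ and integration by parts, set the finite-variation part to zero using Assumption~\ref{ass:funding_account}, and invoke \eqref{eq:GOP_cond1} to rewrite $g\theta$ in terms of $\nabla_x v^*/v^*$. You are in fact slightly more explicit than the paper about the passage from ``vanishes along trajectories'' to ``vanishes on $[0,T)\times D$'' via the full-support assumption, and your observation that \eqref{eq:GOP_cond2} plays no role is correct.
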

\begin{proof}
An application of the integration by parts formula implies that
\begin{align*}
\ud\frac{S(t,T)\tilde{S}^0_t}{V^*_t}
&= \frac{\tilde{S}^0_t}{V^*_t}\ud S(t,T) + S(t,T)\ud\frac{\tilde{S}^0_t}{V^*_t}
+ \ud\biggl\langle S(\cdot,T),\frac{\tilde{S}^0}{V^*}\biggr\rangle_t	\\
&=  \frac{\tilde{S}^0_t}{V^*_t}\ud S(t,T) + \frac{S(t,T)\tilde{S}^0_t}{V^*_t}(\varphi_t\ud t-\theta_t\ud W_t) - \frac{\tilde{S}^0_t}{V^*_t}\theta_t\ud\bigl\langle S(\cdot,T),W\bigr\rangle_t.
\end{align*}
By applying It\^o's formula to the function $s^T$ and making use of Assumption \ref{ass:Markov}, we then obtain
\begin{align*}
\ud\frac{S(t,T)\tilde{S}^0_t}{V^*_t}
&= \frac{\tilde{S}^0_t}{V_t^*}\bigl(\partial_t s^T(t,X_t)
+\nabla^{\top}_xs^T(t,X_t)f(t,X_t)\bigr)\ud t \\
&\quad+\frac{1}{2}\frac{\tilde{S}^0_t}{V_t^*}\tr\bigl(g(t,X_t)^{\top}Hs^T(t,X_t)\,g(t,X_t)\bigr)\ud t\\
&\quad+\frac{\tilde{S}^0_t}{V_t^*}\bigl(\varphi(t,X_t)s^T(t,X_t)-\nabla^{\top}_xs^T(t,X_t)g(t,X_t)\theta(t,X_t)\bigr)\ud t
+(\cdots)\ud W_t.
\end{align*}
As a consequence of Assumption \ref{ass:funding_account}, the process $(s^T(t,X_t)\tilde{S}^0_t/V^*_t)_{t\in[0,T]}$ is a local martingale. This implies that the finite variation term in the last equation must vanish. By Assumption \ref{ass:Markov_GOP}, the market price of risk $\theta(t,X_t)$ satisfies condition \eqref{eq:GOP_cond1}, thereby proving that $s^T$ solves PDE \eqref{eq:PDE_spot}. The terminal condition $s^T(T,x)=1$ follows from the fact that $S(T,T)=1$.
\end{proof}

In the present Markovian setup, the PDE \eqref{eq:PDE_spot} shows that the dynamics of the spot spread are dependent on the funding-liquidity spread $\varphi$. Recalling that the spot spread can be regarded as a term premium required to avoid roll-over risk, as discussed in Sections \ref{sec:ROrisk} and \ref{sec:spot}, this implies that the term premium is generated by the funding-liquidity spread.

\begin{rem}[Markovian structure under fair pricing]	
\label{rem:spot_Markov}
The Markovian structure \eqref{eq:spot_Markov} always holds if Assumptions \ref{ass:Markov} and \ref{ass:Markov_GOP} are satisfied and spot spreads are fairly priced. Indeed, in that case the process $(S(t,T)\tilde{S}^0_t/V^*_t)_{t\in[0,T]}$ is a true martingale and, therefore, $S(t,T)$ is given by $S(t,T)=v^*(t,X_t)E[\exp(\int_t^T(r(u,X_u)+\varphi(u,X_u))\ud u)/v^*(T,X_T)|\cF_t]$, corresponding to the conditional expectation appearing in the right-hand side of \eqref{eq:AtT}. By the Markov property of $X$, this expectation can always be expressed as a function of $(t,X_t)$.
However, $\mathcal{C}^{1,2}$ regularity is not guaranteed in general and needs to be checked by relying on the properties of the specific model under consideration.
\end{rem}

We now derive an analogous PDE representation of forward spreads. For the following lemma, we shall need an additional assumption on the Markovian structure of benchmarked ZCB prices.

\begin{ass}	\label{ass:Markov_ZCB}
For every $T>0$, there exists a function $\hat{p}^T:[0,T]\times D\to\R$ of class $\cC^{1,2}$ such that $P(t,T)/V^*_t=\hat{p}^T(t,X_t)$, for all $t\in[0,T]$.
\end{ass}

Similarly to Remark \ref{rem:spot_Markov}, we point out that in the present Markovian setting Assumption \ref{ass:Markov_ZCB} is always satisfied (up to the $\mathcal{C}^{1,2}$ regularity requirement) if Assumption \ref{ass:Markov_GOP} holds and ZCBs are fairly priced by the GOP, meaning that $\hat{P}(\cdot,T)=P(\cdot,T)/V^*$ is a true martingale, for every $T>0$.
Note also that, if Assumption \ref{ass:Markov_GOP} and \ref{ass:Markov_ZCB} hold, then the function $\hat{p}^T$ introduced in Assumption \ref{ass:Markov_ZCB} necessarily satisfies the condition $\hat{p}^T(T,x)=1/v^*(T,x)$.
The local martingale (and, hence, supermartingale) property of benchmarked ZCB prices also ensures that $\hat{p}^T(t,X_t)>0$ a.s., for all $t\in[0,T]$ and $T>0$.

\begin{lem}	\label{lem:PDE_fwd}
Suppose that Assumption \ref{ass:SPS}, Assumption \ref{ass:Markov_ZCB} and the assumptions of Lemma \ref{lem:PDE_spot} hold.
For $T>0$ and $\delta\in\Delta$, let $s^{T,\delta}:\R_+\times D\to\R$ be a function of class $\cC^{1,2}$. 
If
\be	\label{eq:fwd_Markov}
S_t(T,T+\delta) = s^{T,\delta}(t,X_t),
\qquad\text{ for all }t\in[0,T],
\ee
then the function $s^{T,\delta}$ solves the following PDE, for all $(t,x)\in[0,T)\times D$:
\be	\label{eq:PDE_fwd}	\ba
&\partial_t s^{T,\delta}(t,x)+\nabla^{\top}_xs^{T,\delta}(t,x)\left(f(t,x)+g(t,x)g(t,x)^{\top}\frac{\nabla_x\hat{p}^T(t,x)}{\hat{p}^T(t,x)}\right)	\\
&\; +\frac{1}{2}\tr\bigl(g(t,x)^{\top}Hs^{T,\delta}(t,x) \, g(t,x)\bigr)=0,
\ea	\ee
with terminal condition $s^{T,\delta}(T,x)=s^{T+\delta}(T,x)$, for all $x\in D$, where $s^{T+\delta}$ is as in Lemma \ref{lem:PDE_spot}.
\end{lem}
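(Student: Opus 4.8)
The plan is to mimic the proof of Lemma \ref{lem:PDE_spot}, but now building the local martingale from Corollary \ref{cor:fwd_spread} rather than from Assumption \ref{ass:funding_account}. By Corollary \ref{cor:fwd_spread}, under Assumption \ref{ass:SPS} the benchmarked process $S_t(T,T+\delta)\hat P(t,T)$ is a local martingale on $[0,T]$. Under the Markovian hypotheses \eqref{eq:fwd_Markov} and Assumption \ref{ass:Markov_ZCB}, this process equals $s^{T,\delta}(t,X_t)\,\hat p^T(t,X_t)$, so applying It\^o's formula to the $\mathcal C^{1,2}$ function $(t,x)\mapsto s^{T,\delta}(t,x)\hat p^T(t,x)$ along the diffusion $X$ and setting the drift to zero will produce the desired PDE.

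First I would expand the drift of $s^{T,\delta}(t,X_t)\hat p^T(t,X_t)$ via the Leibniz/product form of It\^o's formula: the drift is
\[
\ba
&\hat p^T\Bigl(\partial_t s^{T,\delta}+\nabla_x^{\top}s^{T,\delta}f+\tfrac12\tr(g^{\top}Hs^{T,\delta}g)\Bigr)\\
&\quad+s^{T,\delta}\Bigl(\partial_t \hat p^T+\nabla_x^{\top}\hat p^T f+\tfrac12\tr(g^{\top}H\hat p^T g)\Bigr)
+\tr\bigl(g^{\top}\nabla_x\hat p^T\,(\nabla_x s^{T,\delta})^{\top}g\bigr),
\ea
\]
all evaluated at $(t,X_t)$. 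The key input is that benchmarked ZCB prices are local martingales (Section \ref{sec:market}), so $\hat p^T(t,X_t)$ itself is a local martingale; under Assumption \ref{ass:Markov} its It\^o drift vanishes identically, i.e.\ $\partial_t \hat p^T+\nabla_x^{\top}\hat p^T f+\tfrac12\tr(g^{\top}H\hat p^T g)=0$ on $[0,T)\times D$. (Here I use that $X_t$ has full support on $D$ to pass from almost-sure vanishing of the drift to pointwise vanishing of the coefficient, exactly as implicitly done in Lemma \ref{lem:PDE_spot}.) This kills the middle bracket. Dividing the remaining drift by $\hat p^T>0$ and using $\tr(g^{\top}\nabla_x\hat p^T(\nabla_x s^{T,\delta})^{\top}g)=\nabla_x^{\top}s^{T,\delta}\,gg^{\top}\nabla_x\hat p^T$ yields precisely the left-hand side of \eqref{eq:PDE_fwd} after writing the cross term as $\nabla_x^{\top}s^{T,\delta}\,gg^{\top}\,\nabla_x\hat p^T/\hat p^T$. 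Setting this to zero (forced by the local martingale property from Corollary \ref{cor:fwd_spread}) gives the PDE. Finally, the terminal condition follows from $S_T(T,T+\delta)=S(T,T+\delta)=s^{T+\delta}(T,X_T)$ by Lemma \ref{lem:PDE_spot} and the support assumption on $X_T$.

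The main obstacle is bookkeeping rather than conceptual: one must correctly identify and cancel the cross-variation term between $s^{T,\delta}(t,X_t)$ and $\hat p^T(t,X_t)$, and carefully justify that the vanishing of the drift of the \emph{product} together with the already-known vanishing of the drift of $\hat p^T$ forces the stated equation (this is why the factorization through $\hat p^T$, and hence the strict positivity guaranteed after Assumption \ref{ass:Markov_ZCB}, is essential). No separate assumption on $\hat p^T$ beyond $\mathcal C^{1,2}$ regularity and positivity is needed, since the martingale property of benchmarked ZCB prices is already available from market viability; this is the only genuinely new ingredient compared with Lemma \ref{lem:PDE_spot}, where the analogous role was played by the $\tilde S^0/V^*$ dynamics.
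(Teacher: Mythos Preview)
Your proposal is correct and follows essentially the same route as the paper: the paper applies integration by parts to $S_t(T,T+\delta)\hat P(t,T)$, discards the $S\,\ud\hat P$ term as a local martingale, and then applies It\^o's formula to $s^{T,\delta}(t,X_t)$ to isolate the drift, which is equivalent to your direct application of It\^o to the product function followed by cancelling the drift of $\hat p^T$ via its own PDE. The terminal condition is obtained in the same way in both arguments.
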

\begin{proof}
The proof is similar to that of Lemma \ref{lem:PDE_spot}. Applying integration by parts, we have that
\[
\ud \bigl(S_t(T,T+\delta)\hat{P}(t,T)\bigr)
= S_t(T,T+\delta)\ud\hat{P}(t,T) + \hat{P}(t,T)\ud S_t(T,T+\delta) + \ud\bigl\langle S_{\cdot}(T,T+\delta),\hat{P}(\cdot,T)\bigr\rangle_t.
\]
Since the first term on the right-hand side is a local martingale (recalling that benchmarked ZCB prices are local martingales), an application of It\^o's formula yields that
\begin{align*}
\ud\bigl(s^{T,\delta}(t,X_t)\hat{p}^T(t,X_t)\bigr)
&= \hat{p}^T(t,X_t) \bigl(\partial_t s^{T,\delta}(t,X_t)+\nabla^{\top}_xs^{T,\delta}(t,X_t)f(t,X_t)\bigr)\ud t\\
&\quad + \frac{1}{2}\hat{p}^T(t,X_t)\tr\bigl(g(t,X_t)^{\top}Hs^{T,\delta}(t,X_t)\,g(t,X_t)\bigr)\ud t\\
&\quad+\nabla^{\top}_xs^{T,\delta}(t,X_t)g(t,X_t)g(t,X_t)^{\top}{\nabla}_x\hat{p}^T(t,X_t)\ud t+(\cdots)\ud W_t.
\end{align*}
By Corollary \ref{cor:fwd_spread}, the process $(S_t(T,T+\delta)\hat{P}(t,T))_{t\in[0,T]}$ is a local martingale. This implies that the finite variation term in the last equation must vanish, thereby proving that the function $s^{T,\delta}$ satisfies the PDE \eqref{eq:PDE_fwd}. 
The Markovian representation \eqref{eq:spot_Markov} implies that
\[
s^{T,\delta}(T,X_T) = S_T(T,T+\delta) = S(T,T+\delta) = s^{T+\delta}(T,X_T),
\]
thus obtaining the terminal condition $s^{T,\delta}(T,x)=s^{T+\delta}(T,x)$, for all $x\in D$.
\end{proof}

\begin{rem}[Markovian structure under fair pricing]
A property analogous to Remark \ref{rem:spot_Markov} also applies to Lemma \ref{lem:PDE_fwd}. More specifically, under the validity of condition \eqref{eq:spot_Markov} and Assumption \ref{ass:Markov_ZCB}, the Markovian structure \eqref{eq:fwd_Markov} always holds if SPSs are fairly priced by the GOP (meaning that Assumption \ref{ass:SPS} holds in the stronger form of a true martingale).
Note, however, that $\mathcal{C}^{1,2}$ regularity of the function $s^{T,\delta}$ in \eqref{eq:fwd_Markov} is not ensured in general.
\end{rem}

In general, the Cauchy problems associated to the parabolic PDEs \eqref{eq:PDE_spot} and \eqref{eq:PDE_fwd} admit more than one $\cC^{1,2}$ solution. Uniqueness holds when those problems are restricted to a suitably chosen {\em uniqueness class} $\cC_{\rm un}$, namely a family of $\cC^{1,2}$ functions in which there exists at most one solution. Typical choices for the uniqueness class are the family of functions with exponentially quadratic growth or the family of non-negative functions. Within these two families of functions, under additional assumptions on the coefficients $f$ and $g$ in \eqref{eq:sde_X} and on the functions appearing in Assumption \ref{ass:Markov}, classical results ensure the existence of unique solutions to \eqref{eq:PDE_spot} and \eqref{eq:PDE_fwd} (see, e.g., Chapter 6 of \cite{Friedman75} or Chapter 6 of \cite{Pascucci}).

In \cite[Section 2.3.1]{Pavarana22}, under the assumption of linear-Gaussian dynamics for the Markov factor process $X$ and of a quadratic structure of the functions appearing in Assumption \ref{ass:Markov}, explicit solutions to the PDEs \eqref{eq:PDE_spot} and \eqref{eq:PDE_fwd} are derived. The solutions result to have an exponentially-quadratic form with coefficients determined by the solutions of suitable ODEs.

\subsection{Stochastic control representations of spot and forward spreads}
\label{sec:representations}

Making use of the results of Section \ref{sec:preliminary}, we proceed to represent spot and forward spreads as the value functions of suitable stochastic optimal control problems.

For all stochastic optimal control problems considered in the following, we use the generic notation $\cU$ to denote the set of {\em admissible controls}. More specifically, in Proposition \ref{prop:control_bond} and Theorems \ref{thm:control_spot} and \ref{thm:control_fwd}, the set $\cU$ contains all $\R^d$-valued progressively measurable processes $u$ such that the SDE defining the controlled process admits a unique weak solution and the expectation defining the objective functional is finite. We shall use the same notation $\cU$ even if the specific structure of each problem under consideration would correspond to different requirements on the controls. 

In the following, $\cC_{\rm un}$ denotes a generic uniqueness class, whose choice depends on the specific properties of the model under consideration, as explained at the end of Section \ref{sec:preliminary}. In our context, an especially relevant uniqueness class is given by the family of functions that correspond to fair prices (see Definition \ref{def:fair}). The observation that such family of functions constitutes a uniqueness class is a consequence of the fact that a true martingale is entirely determined by its terminal value. However, the results of this section are stated and valid for a generic uniqueness class $\cC_{\rm un}$.

Before considering spot/forward spreads, we first provide a stochastic control representation of benchmarked ZCB prices.
As a preliminary to the next proposition, we derive the associated PDE. We recall that benchmarked ZCB prices are local martingales. Therefore, if Assumptions \ref{ass:Markov_GOP} and \ref{ass:Markov_ZCB} hold, a straightforward application of It\^o's formula yields the following PDE:
\begin{align}
\partial_t\hat{p}^T(t,x)+\nabla^{\top}_x\hat{p}^T(t,x)f(t,x)+\frac{1}{2}\tr\bigl(g(t,x)^{\top}H\hat{p}^T(t,x)\,g(t,x)\bigr)&=0,
\quad\forall (t,x)\in[0,T)\times D,\notag\\
\hat{p}^T(T,x) &= \frac{1}{v^*(T,x)},
\quad\forall x\in D,
\label{eq:PDE_ZCB}
\end{align}
where the function $\hat{p}^T$ is as in Assumption \ref{ass:Markov_ZCB}. We can then state the following result, which can be regarded as a counterpart to \cite[Section 3.1]{GR13} under the benchmark approach.

\begin{prop}	\label{prop:control_bond}
Suppose that Assumptions  \ref{ass:Markov}, \ref{ass:Markov_GOP} and \ref{ass:Markov_ZCB} hold and there exists a unique solution $\hat{p}^T$ to \eqref{eq:PDE_ZCB} in the class $\cC_{\rm un}$.
For $T>0$, consider the following stochastic optimal control problem:
\be	\label{eq:control_bond} \left\{\ba
& \ud X^u_t=\bigl(f(t,X^u_t)+g(t,X^u_t)\,u_t\bigr)\ud t+g(t,X^u_t)\ud W_t, \\
&w^T(t,x)=\displaystyle{\min_{u\in\mathcal{U}}} \, E_{t,x}\left[\frac{1}{2}\int^T_t\|u_s\|^2\ud s+\log v^*(T,X^u_T)\right],
\ea\right.\ee
where the function $v^*$ is as in Assumption \ref{ass:Markov_GOP}.
Suppose that the value function satisfies $w^T\in\mathcal{C}^{1,2}$ and there exists an optimal control belonging to $\mathcal{U}$. Assume moreover that $\exp(-w^T)\in\mathcal{C}_{\rm un}$ and the natural candidate for the optimal control defined by
\be	\label{eq:opt_control_bond}
u^*(t,x) := -g(t,x)^{\top} \nabla_x w^T(t,x),
\qquad \text{ for all } (t,x)\in[0,T]\times D,
\ee
belongs to $\mathcal{U}$. Then it holds that $\hat{p}^T(t,x) = \exp(-w^T(t,x))$, for all $(t,x)\in[0,T]\times D$.
\end{prop}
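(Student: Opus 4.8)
The plan is to read the control problem \eqref{eq:control_bond} as a standard finite-horizon stochastic control problem and to pass between its semilinear Hamilton--Jacobi--Bellman (HJB) equation and the linear Cauchy problem \eqref{eq:PDE_ZCB} by a logarithmic (Cole--Hopf/Fleming) transformation; this is the benchmark-approach analogue of \cite[Section 3.1]{GR13}, with the constant terminal condition $P(T,T)=1$ there replaced here by the terminal cost $\log v^*(T,\cdot)$. First I would write down the HJB equation for \eqref{eq:control_bond}: since the generator of the controlled diffusion $X^u$ acting on a smooth $\phi$ is $\nabla_x^\top\phi\,(f+gu)+\tfrac12\tr(g^\top H\phi\,g)$ and the running cost is $\tfrac12\|u\|^2$, it reads
\be
\partial_t w^T+\nabla_x^\top w^T f+\tfrac12\tr\bigl(g^\top Hw^T\,g\bigr)+\inf_{u\in\R^d}\Bigl\{(g^\top\nabla_x w^T)^\top u+\tfrac12\|u\|^2\Bigr\}=0
\ee
on $[0,T)\times D$, with $w^T(T,x)=\log v^*(T,x)$. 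The inner infimum is a coercive quadratic in $u$, attained at $u^*(t,x)=-g(t,x)^\top\nabla_x w^T(t,x)$ --- precisely the candidate \eqref{eq:opt_control_bond} --- with value $-\tfrac12\|g^\top\nabla_x w^T\|^2$, so the HJB reduces to $\partial_t w^T+\nabla_x^\top w^T f-\tfrac12\|g^\top\nabla_x w^T\|^2+\tfrac12\tr(g^\top Hw^T g)=0$.

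Next I would argue that the value function $w^T$, assumed to be of class $\cC^{1,2}$, is a classical solution of this reduced HJB equation. This is the standard passage from a smooth value function to its HJB equation via the dynamic programming principle: applying It\^o's formula to $w^T(s,X^u_s)$ over short horizons, localizing to control the local-martingale part, and exploiting the finiteness of the objective functional built into the definition of $\cU$, one obtains the supersolution inequality for every admissible control; the existence of an optimal control in $\cU$, together with the admissibility of the feedback $u^*$ from \eqref{eq:opt_control_bond}, then yields the reverse inequality and hence equality. The terminal condition $w^T(T,\cdot)=\log v^*(T,\cdot)$ is immediate from the definition of the objective in \eqref{eq:control_bond}.

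The core of the argument is then the transformation itself. Setting $\psi:=\exp(-w^T)$, which is strictly positive and of class $\cC^{1,2}$, one has $w^T=-\log\psi$, $\nabla_x w^T=-\nabla_x\psi/\psi$ and $Hw^T=-H\psi/\psi+\nabla_x\psi\,\nabla_x\psi^\top/\psi^2$. Substituting into the reduced HJB, the contribution $-\tfrac12\|g^\top\nabla_x w^T\|^2=-\tfrac12\|g^\top\nabla_x\psi\|^2/\psi^2$ coming from the minimized running cost cancels exactly the term $+\tfrac12\|g^\top\nabla_x\psi\|^2/\psi^2$ produced by $\tfrac12\tr(g^\top Hw^T g)$, and after multiplying through by $-\psi$ one is left precisely with
\be
\partial_t\psi+\nabla_x^\top\psi\,f+\tfrac12\tr\bigl(g^\top H\psi\,g\bigr)=0
\ee
on $[0,T)\times D$, with $\psi(T,x)=\exp(-\log v^*(T,x))=1/v^*(T,x)$, i.e.\ $\psi$ solves the Cauchy problem \eqref{eq:PDE_ZCB}. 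Since $\psi=\exp(-w^T)\in\cC_{\rm un}$ by assumption and $\hat p^T$ is the unique solution of \eqref{eq:PDE_ZCB} in $\cC_{\rm un}$, we conclude $\hat p^T=\exp(-w^T)$ on $[0,T]\times D$.

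The main obstacle is the second step: rigorously deriving the HJB equation for the $\cC^{1,2}$ value function requires the dynamic programming principle and careful localization in the It\^o estimates (to handle the local-martingale part and to pass to the limit using only the finiteness encoded in $\cU$), and this is exactly where the hypotheses on the existence of an optimal control and on the admissibility of $u^*$ are used. By comparison, the Cole--Hopf cancellation in the third step is a routine computation, and the concluding identification is immediate from uniqueness in the class $\cC_{\rm un}$.
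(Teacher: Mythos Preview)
Your proposal is correct and follows essentially the same approach as the paper: derive the reduced HJB equation for $w^T$, apply the logarithmic (Cole--Hopf) transformation to obtain a solution of \eqref{eq:PDE_ZCB}, and conclude by uniqueness in $\cC_{\rm un}$. The only cosmetic difference is that the paper reaches the HJB equation via a direct martingale argument (optimality of $u^{\rm opt}$ forces $w^T(t,X^{\rm opt}_t)+\tfrac12\int_0^t\|u^{\rm opt}_s\|^2\ud s$ to be a martingale, then It\^o plus substitution of $u^*$ for $u^{\rm opt}$), whereas you sketch the standard DPP route; both lead to the same semilinear equation and the subsequent transformation and uniqueness steps are identical.
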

\begin{proof}
By assumption, problem \eqref{eq:control_bond} admits an optimal control $u^{\rm opt}\in\cU$. Let us denote by $X^{\rm opt}$ the solution to the SDE in  \eqref{eq:control_bond} for $u=u^{\rm opt}$, whose existence is guaranteed by the definition of $\cU$. By the definition of the value function $w^T$, optimality of $u^{\rm opt}$ implies that the process 
\[
w^T(t,X^{\rm opt}_t) + \frac{1}{2}\int_0^t\|u^{\rm opt}_s\|^2\ud s,
\qquad t\in[0,T],
\]
is a martingale. Making use of this property, an application of It\^o's formula yields that
\be\ba
& \partial_t w^T(t,x)+\nabla^{\top}_xw^T(t,x)\bigl(f(t,x)+g(t,x)u^{\rm opt}(t,x)\bigr)+\frac{1}{2}\tr\bigl(g(t,x)^{\top}Hw^T(t,x)g(t,x)\bigr)	\\
&\qquad +\frac{1}{2}\|u^{\rm opt}(t,x)\|^2=0,
\ea\ee
for all $(t,x)\in[0,T)\times D$. 
Since by assumption the natural candidate $u^*$ for the optimal control given in \eqref{eq:opt_control_bond} belongs to $\mathcal{U}$, we can replace $u^{\rm opt}$ in the last equation by $u^*$, thus obtaining 
\be	\label{eq:control_bond_proof}	\ba
&\partial_t w^T(t,x)+\nabla^{\top}_xw^T(t,x)f(t,x)
+\frac{1}{2}\tr\bigl(g(t,x)^{\top}Hw^T(t,x)g(t,x)\bigr)	\\
&\qquad -\frac{1}{2}\nabla^{\top}_xw^T(t,x)g(t,x)g(t,x)^{\top}{\nabla}_xw^T(t,x) = 0,
\ea\ee
for all $(t,x)\in[0,T)\times D$, with terminal condition $w^T(T,x)=\log v^*(T,x)$, for all $x\in D$.
Define the function $p':[0,T]\times D\to(0,+\infty)$ by $p'(t,x):=\exp(-w^T(t,x))$, for all $(t,x)\in[0,T]\times D$. Applying this transformation to \eqref{eq:control_bond_proof}, we arrive at the following PDE:
\[
\partial_tp'(t,x)+\nabla^{\top}_xp'(t,x)f(t,x)+\frac{1}{2}\tr\bigl(g(t,x)^{\top}Hp'(t,x)\,g(t,x)\bigr)=0,
\]
for all $(t,x)\in[0,T)\times D$, with terminal condition
\[
p'(T,x) = e^{-w^T(T,x)} = \frac{1}{v^*(T,x)},
\qquad\text{ for all }x\in D.
\]
We have therefore shown that the function $p'$ solves the PDE \eqref{eq:PDE_ZCB}. 
The result of the proposition follows since by assumption $p'\in\cC_{\rm un}$ and there exists a unique solution to problem \eqref{eq:PDE_ZCB} within the family of functions $\cC_{\rm un}$.
\end{proof}

The stochastic optimal control problem considered in Proposition \ref{prop:control_bond} can be interpreted as the problem of a representative issuer of a ZCB who aims at minimizing the yield of the benchmarked ZCB (recalling that the benchmarked ZCB satisfies the terminal condition $\hat{P}(T,T)=1/V^*_T$) by changing the drift in the dynamics of the factor process through a feedback control, being thereby subject to a quadratic cost for her/his control actions. This corresponds to the analogue under the benchmark approach of the situation described in \cite[Remark 3.2]{GR13}. 

\begin{rem}	\label{rem:control_bond}
(1) In Proposition \ref{prop:control_bond}, the admissibility of the candidate optimal control \eqref{eq:opt_control_bond} is assumed. Similarly as in \cite{GR13}, it can be easily checked that the HJB equation associated to problem \eqref{eq:control_bond} coincides with the PDE \eqref{eq:PDE_ZCB} and yields a candidate optimal control of the form \eqref{eq:opt_control_bond}. Once admissibility  is verified, the result of Proposition \ref{prop:control_bond} follows. 
In our context, admissibility can be proved if suitable conditions are assumed on the coefficients of \eqref{eq:sde_X} and on the functions in Assumption \ref{ass:Markov}. However, these conditions turn out to be overly restrictive if imposed at the level of the general setup of Section \ref{sec:markov}. We therefore prefer to leave admissibility as a property that should be checked on a case by case basis, depending on the specific model under analysis. 
This remark applies also to Theorems \ref{thm:control_spot} and \ref{thm:control_fwd} below.

(3) We point out that the controlled factor process $X^u$ in problem \eqref{prop:control_bond}, and similarly in Theorems \ref{thm:control_spot} and \ref{thm:control_fwd} below, is to be considered as a purely formal process, unlike the Markov process $X$ introduced in \eqref{eq:sde_X} that can be given an economic interpretation.
\end{rem}

The stochastic optimal control problem considered in Proposition \ref{prop:control_bond} relies on a logarithmic transformation. In our context, the application of a logarithmic transformation is linked to the fact that ZCB prices are local martingales when benchmarked with respect to $V^*$, which represents the optimal portfolio for logarithmic utility (see Remark \ref{rem:log_price}). 
Aiming at deriving stochastic control representations for spot spreads, the relevant local martingale property is given by Assumption \ref{ass:funding_account}. In this case, in the presence of a non-null funding-liquidity spread $\varphi$, benchmarked spot spreads $S(t,T)/V^*_t$ are not local martingales. This suggests to consider transformations different from the logarithmic one (compare also with the discussion in Section \ref{sec:spread} below).
Inspired by ideas going back to \cite{DeFrancescoRunggaldier93}, we shall consider transformations induced by power functions, leading to a parametrized family of stochastic optimal control problems. Power transformations include both concave and convex optimization problems, depending on whether the power is smaller or greater than one. As discussed in Remark \ref{rem:interpretation} below, this will enable us to reflect the perspective of both a representative lender and a representative borrower in the determination of spot/forward spreads, thereby leading to a game theoretic interpretation of spot/forward spreads.
Moreover, a power utility function will also be adopted in Section \ref{sec:sensitive} to relate the funding-liquidity spread to the risk-sensitive problem for a representative investor.
We refer the interested reader to \cite[Section 3.1]{Pavarana22} for an application of logarithmic transformations to spot/forward spreads.

\begin{thm}	\label{thm:control_spot}
Suppose that the assumptions of Lemma \ref{lem:PDE_spot} are satisfied and there exists a unique solution $s^T$ to \eqref{eq:PDE_spot} in the class $\cC_{\rm un}$.
Let $\eta^-\in(0,1)$ and $\eta^+>1$. For $T>0$, consider the two following stochastic optimal control problems:
\begin{subequations}
\be	\label{eq:control_spot-}	\left\{\ba
& dX^{u,-}_t=\Biggl(f(t,X^{u,-}_t)-g(t,X^{u,-}_t)g(t,X^{u,-}_t)^{\top}\displaystyle{\frac{\nabla_xv^*(t,X^{u,-}_t)}{v^*(t,X^{u,-}_t)}}\Biggr.\\
&\qquad\qquad\quad\Biggl.
+\sqrt{\frac{1-\eta^-}{\eta^-}} g(t,X^{u,-}_t) u_t\Biggr)\ud t+g(t,X^{u,-}_t)\ud W_t, 	\\
& z_-^T(t,x)=\max_{u\in\mathcal{U}}\,E_{t,x}\left[\exp\left(\int^T_t \Bigl(\eta^-\varphi(s,X^{u,-}_s)-\frac{1}{2}\|u_s\|^2\Bigr)\ud s\right)\right],
\ea\right.
\ee
\be	\label{eq:control_spot+}
\left\{\ba
& dX^{u,+}_t=\Biggl(f(t,X^{u,+}_t)-g(t,X^{u,+}_t)g(t,X^{u,+}_t)^{\top}\displaystyle{\frac{\nabla_xv^*(t,X^{u,+}_t)}{v^*(t,X^{u,+}_t)}}\Biggr.\\
&\qquad\qquad\quad\Biggl.
-\sqrt{\frac{\eta^+-1}{\eta^+}} g(t,X^{u,+}_t) u_t\Biggr)\ud t+g(t,X^{u,+}_t)\ud W_t, 	\\
& z_+^T(t,x)=\min_{u\in\mathcal{U}}\,E_{t,x}\left[\exp\left(\int^T_t \Bigl(\eta^+\varphi(s,X^{u,+}_s)+\frac{1}{2}\|u_s\|^2\Bigr)\ud s\right)\right],
\ea\right.
\ee
\end{subequations}
where the function $v^*$ is as in Assumption \ref{ass:Markov_GOP}.
Suppose that $z^T_{\pm}\in\mathcal{C}^{1,2}$ and there exists optimal controls belonging to $\mathcal{U}$ for both problems \eqref{eq:control_spot-} and \eqref{eq:control_spot+}. Assume also that $(z^T_-)^{1/\eta^-}\in\mathcal{C}_{\rm un}$ and $(z^T_+)^{1/\eta^+}\in\mathcal{C}_{\rm un}$ and the natural candidates for the optimal controls defined by
\be	\label{eq:opt_control_spot}
u^{*,-}(t,x) := \sqrt{\frac{1-\eta^-}{\eta^-}}g(t,x)^{\top}\frac{\nabla_x z_-^T(t,x)}{z_-^T(t,x)},
\quad
u^{*,+}(t,x) := \sqrt{\frac{\eta^+-1}{\eta^+}}g(t,x)^{\top}\frac{\nabla_x z_+^T(t,x)}{z_+^T(t,x)}.
\ee
belong to $\mathcal{U}$. Then, for all $(t,x)\in[0,T]\times D$, it holds that
\[
\bigl(z_-^T(t,x)\bigr)^{1/\eta^-} = s^T(t,x) = \bigl(z_+^T(t,x)\bigr)^{1/\eta^+}.
\]
\end{thm}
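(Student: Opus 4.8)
The plan is to adapt the proof of Proposition \ref{prop:control_bond}, treating the two problems \eqref{eq:control_spot-} and \eqref{eq:control_spot+} separately and replacing the logarithmic transformation used there by the power transformations $z\mapsto z^{1/\eta^-}$ and $z\mapsto z^{1/\eta^+}$. For \eqref{eq:control_spot-} I would first invoke the assumed existence of an optimal control $u^{\mathrm{opt}}\in\cU$ and argue, as the multiplicative analogue of the martingale property used in the proof of Proposition \ref{prop:control_bond}, that the process $z_-^T(t,X^{u,-}_t)\exp\bigl(\int_0^t(\eta^-\varphi(s,X^{u,-}_s)-\tfrac12\|u_s\|^2)\ud s\bigr)$, evaluated along $u=u^{\mathrm{opt}}$, is a martingale on $[0,T]$. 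Applying It\^o's formula to it and using Assumption \ref{ass:Markov}, the finite-variation part must vanish, producing a PDE for $z_-^T$ involving $u^{\mathrm{opt}}$. Exactly as in Proposition \ref{prop:control_bond}, since the candidate $u^{*,-}$ of \eqref{eq:opt_control_spot} is the pointwise maximiser of the (concave, because $z_-^T>0$) Hamiltonian $u\mapsto\sqrt{(1-\eta^-)/\eta^-}\,\nabla_x^\top z_-^T\,g\,u-\tfrac12\|u\|^2 z_-^T$ and is admissible by hypothesis, I may substitute $u^{*,-}$ for $u^{\mathrm{opt}}$, which gives the semilinear HJB equation
\[
\partial_t z_-^T+\nabla_x^\top z_-^T\bigl(f-gg^\top\nabla_xv^*/v^*\bigr)+\tfrac12\tr\bigl(g^\top Hz_-^T\,g\bigr)+\eta^-\varphi\,z_-^T+\tfrac{1-\eta^-}{\eta^-}\tfrac{\|g^\top\nabla_xz_-^T\|^2}{2z_-^T}=0
\]
on $[0,T)\times D$, with terminal condition $z_-^T(T,x)=1$ for all $x\in D$ (the inner integral being empty at $t=T$).

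The second step is the power transformation: I would set $s'(t,x):=(z_-^T(t,x))^{1/\eta^-}$, so that $z_-^T=(s')^{\eta^-}$, and substitute into the HJB equation. Through the Hessian term the chain rule contributes an extra summand $\tfrac{\eta^-(\eta^--1)}{2}(s')^{\eta^--2}\|g^\top\nabla_xs'\|^2$, while the term $\tfrac{1-\eta^-}{\eta^-}\|g^\top\nabla_xz_-^T\|^2/(2z_-^T)$ turns into $\tfrac{\eta^-(1-\eta^-)}{2}(s')^{\eta^--2}\|g^\top\nabla_xs'\|^2$; these two cancel identically, and after dividing through by $\eta^-(s')^{\eta^--1}>0$ the equation collapses to exactly PDE \eqref{eq:PDE_spot}, with $s'(T,x)=1=s^T(T,x)$. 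Since $s'=(z_-^T)^{1/\eta^-}\in\cC_{\mathrm{un}}$ by assumption and $s^T$ is the unique solution of \eqref{eq:PDE_spot} in $\cC_{\mathrm{un}}$, this yields $(z_-^T)^{1/\eta^-}=s^T$. The treatment of \eqref{eq:control_spot+} is identical, except that it is a minimisation: the Hamiltonian $u\mapsto-\sqrt{(\eta^+-1)/\eta^+}\,\nabla_x^\top z_+^T\,g\,u+\tfrac12\|u\|^2 z_+^T$ is convex with minimiser $u^{*,+}$, the resulting HJB equation for $z_+^T$ carries the term $-\tfrac{\eta^+-1}{\eta^+}\|g^\top\nabla_xz_+^T\|^2/(2z_+^T)$, and under $z_+^T=(s')^{\eta^+}$ the Hessian contributes $\tfrac{\eta^+(\eta^+-1)}{2}(s')^{\eta^+-2}\|g^\top\nabla_xs'\|^2$ (positive, since $\eta^+>1$), which cancels the opposite-signed quadratic term; dividing by $\eta^+(s')^{\eta^+-1}$ again recovers \eqref{eq:PDE_spot} with terminal value $1$, so $(z_+^T)^{1/\eta^+}=s^T$ by the same uniqueness argument.

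I expect the only genuinely substantive point, beyond the standing $\cC^{1,2}$ regularity and admissibility hypotheses, to be this cancellation of the quadratic-in-gradient terms: it is what forces the precise drift-rescaling constants $\sqrt{(1-\eta^-)/\eta^-}$ and $\sqrt{(\eta^+-1)/\eta^+}$ in \eqref{eq:control_spot-}--\eqref{eq:control_spot+}, the factors $\eta^\pm$ multiplying $\varphi$ in the objectives, and the reason a concave power $\eta^-\in(0,1)$ must be paired with a maximisation while a convex power $\eta^+>1$ is paired with a minimisation --- so that in each case the Hamiltonian admits a finite optimiser and the residual quadratic term has exactly the sign needed to annihilate the one generated by the power map. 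The step of replacing $u^{\mathrm{opt}}$ by $u^{*,\pm}$, i.e. of checking that the HJB equation genuinely holds for the value function, is carried out exactly as in Proposition \ref{prop:control_bond}, using the assumed regularity of $z_\pm^T$ together with the admissibility and the pointwise Hamiltonian optimality of $u^{*,\pm}$.
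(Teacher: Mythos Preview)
Your proposal is correct and follows essentially the same route as the paper: the martingale property of $z_-^T(t,X^{{\rm opt},-}_t)\exp\bigl(\int_0^t(\eta^-\varphi-\tfrac12\|u^{{\rm opt},-}_s\|^2)\ud s\bigr)$, It\^o's formula, substitution of the admissible candidate $u^{*,-}$, then the power transformation $s'=(z_-^T)^{1/\eta^-}$ and uniqueness in $\cC_{\rm un}$. Your explicit verification of the cancellation of the quadratic-in-gradient terms and of the Hamiltonian optimality of $u^{*,\pm}$ spells out details the paper leaves implicit, but the argument is the same.
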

\begin{proof}
Since problems \eqref{eq:control_spot-} and \eqref{eq:control_spot+} can be treated in full analogy, we shall only give the proof in the case of problem \eqref{eq:control_spot-}. 
By assumption, the problem admits an optimal control $u^{{\rm opt},-}\in\cU$. Let us denote by $X^{{\rm opt},-}$ the solution to the SDE in  \eqref{eq:control_spot-} for $u=u^{{\rm opt},-}$, whose existence is guaranteed by the assumption that $u^{{\rm opt},-}\in\cU$. By the definition of the value function $z^T_-$, optimality of $u^{{\rm opt},-}$ implies that the process 
\[
z_-^T(t,X^{{\rm opt},-}_t)\,e^{\int_0^t(\eta^-\varphi(s,X^{{\rm opt},-}_s)-\frac{1}{2}\|u^{{\rm opt},-}_s\|^2)\ud s},
\qquad t\in[0,T],
\]
is a martingale. 
Making use of this property, an application of It\^o's formula yields that
\begin{align*}
&\partial_t z_-^T(t,x)+\nabla^{\top}_xz_-^T(t,x)\left(f(t,x)-g(t,x)g(t,x)^{\top}\frac{\nabla_x v^*(t,x)}{v^*(t,x)}+\sqrt{\frac{1-\eta^-}{\eta^-}}g(t,x)\,u^{{\rm opt},-}(t,x)\right)	\\
&\; +\frac{1}{2}\tr\bigl(g(t,x)^{\top}Hz_-^T(t,x)\,g(t,x)\bigr)
+\eta^- z_-^T(t,x)\varphi(t,x)
-\frac{1}{2}z_-^T(t,x)\|u^{{\rm opt},-}(t,x)\|^2=0,
\end{align*}
for all $(t,x)\in[0,T)\times D$. 
Since by assumption the natural candidate $u^{*,-}$ for the optimal control given in \eqref{eq:opt_control_spot} belongs to $\mathcal{U}$, we can replace $u^{{\rm opt},-}$ in the last equation by $u^{*,-}$, thus obtaining 
\begin{align*}
&\partial_t z_-^T(t,x)+\nabla^{\top}_xz_-^T(t,x)\left(f(t,x)-g(t,x)g(t,x)^{\top}\frac{\nabla_x v^*(t,x)}{v^*(t,x)}\right)
+\frac{1}{2}\tr\bigl(g(t,x)^{\top}Hz_-^T(t,x)\,g(t,x)\bigr)	\\
&\; +\frac{1}{2}\frac{1-\eta^-}{\eta^-}\frac{\nabla^{\top}_x z^T_-(t,x)}{z^T_-(t,x)}g(t,x)g(t,x)^{\top}{\nabla}_xz_-^T(t,x) 
+ \eta^-z_-^T(t,x)\varphi(t,x) = 0,
\end{align*}
for all $(t,x)\in[0,T)\times D$, with terminal condition $z_-^T(T,x)=1$, for all $x\in D$.
Define the function $s':[0,T]\times D\to\R$ by $s'(t,x):=(z_-^T(t,x))^{1/\eta^-}$, for all $(t,x)\in[0,T]\times D$. Applying this transformation to the previous PDE, we arrive at the following PDE:
\begin{align*}
&\partial_t s'(t,x)+\nabla^{\top}_xs'(t,x)\left(f(t,x)-g(t,x)g(t,x)^{\top}\frac{\nabla_xv^*(t,x)}{v^*(t,x)}\right)	\\
&\; +\frac{1}{2}\tr\bigl(g(t,x)^{\top}Hs'(t,x) \, g(t,x)\bigr)+\varphi(t,x)s'(t,x)=0,
\end{align*}
for all $(t,x)\in[0,T)\times D$, with terminal condition $s'(T,x)=1$, for all $x\in D$.
We have therefore shown that the function $s'$ solves the PDE \eqref{eq:PDE_spot}. 
Since by assumption in the class $\cC_{\rm un}$ there exists a unique solution to \eqref{eq:PDE_spot} with terminal condition $s^T(T,x)=1$  and $(z^T_-)^{1/\eta^-}\in\cC_{\rm un}$, it follows that $(z^T_-(t,x))^{1/\eta^-}=s^T(t,x)$, for all $(t,x)\in[0,T]\times D$.
\end{proof}

Problem \eqref{eq:control_spot-} can be regarded as the problem of a representative lender with a power preference structure who aims at maximizing the value of the discounted roll-over-risk-adjusted borrowing account by affecting the dynamics of the factor process through a feedback control, being thereby subject to a quadratic cost for her/his control actions. Analogously, problem \eqref{eq:control_spot+} can be regarded as the corresponding problem of a representative borrower who aims at minimizing the discounted roll-over-risk-adjusted borrowing account.

\begin{rem}[A game theoretic interpretation of spot spreads]	
\label{rem:interpretation}
As discussed in Section \ref{sec:spot}, in the presence of funding-liquidity risk, one should have $s^T(t,x)\geq 1$. In this case, under the assumptions of Theorem \ref{thm:control_spot}, the following inequalities hold for all $\eta^-\in(0,1)$ and $\eta^+>0$:
\be	\label{eq:inequalities}
z^T_-(t,x) \leq s^T(t,x) \leq z^T_+(t,x).
\ee
We have thus obtained lower and upper bounds for the spot spread $s^T(t,x)$. Observe that these bounds can be made as tight as one wishes, since 
$
\lim_{\eta^-\to 1}z^T_-(t,x)
= \lim_{\eta^+\to 1}z^T_+(t,x)
= s^T(t,x).
$
Recalling the lender/borrower interpretation of problems \eqref{eq:control_spot-}-\eqref{eq:control_spot+} discussed above, the inequalities \eqref{eq:inequalities} suggest a game theoretic interpretation according to which the spot spread $s^T(t,x)$ represents  the value of a game between a lender and a borrower.
\end{rem}

\begin{rem}	\label{rem:eta}
The result of Theorem \ref{thm:control_spot} also holds for $\eta^+<0$. This case leads to a convex optimization problem and is fully analogous to problem \eqref{eq:control_spot+}, yielding  the same representation of spot spreads. However, the rightmost inequality in \eqref{eq:inequalities} fails to hold for $\eta^+<0$ .
\end{rem}

\begin{rem}	\label{rem:comment_drift}
In view of condition \eqref{eq:GOP_cond1}, the term $g(t,X_t^{u,\pm})^{\top}\nabla_xv^*(t,X_t^{u,\pm})/v^*(t,X_t^{u,\pm})$ present in the drift of the controlled process $X^{u,\pm}$ in problems \eqref{eq:control_spot-}-\eqref{eq:control_spot+} can be more compactly rewritten as $\theta(t,X_t^{u,\pm})$. This term arises because $S(t,T)\tilde{S}^0$ is a local martingale when denominated with respect to $V^*$, whose volatility is given by the market price of risk $\theta$.
\end{rem}

We point out that remarks analogous to Remark \ref{rem:control_bond} apply to the stochastic optimal control problems considered in Theorem \ref{thm:control_spot}, as well as to the problems considered in Theorem \ref{thm:control_fwd} below.
The next theorem provides an analogue to Theorem \ref{thm:control_spot} for forward spreads.

\begin{thm}	\label{thm:control_fwd}
Suppose that the assumptions of Lemma \ref{lem:PDE_spot} and Lemma \ref{lem:PDE_fwd} are satisfied and there exists a unique solution $s^{T,\delta}$ to \eqref{eq:PDE_fwd} in the class $\cC_{\rm un}$.
Let $\eta^-\in(0,1)$ and $\eta^+>1$.
For $T>0$ and $\delta\in\Delta$, consider the two following stochastic optimal control problems:
\begin{subequations}
\be	\label{eq:control_fwd-}	\left\{\ba
& dX^{u,-}_t=\Biggl(f(t,X^{u,-}_t)+g(t,X^{u,-}_t)g(t,X^{u,-}_t)^{\top}\displaystyle{\frac{\nabla_x\hat{p}^T(t,X^{u,-}_t)}{\hat{p}^T(t,X^{u,-}_t)}}\Biggr.\\
&\qquad\qquad\quad\Biggl.
+\sqrt{\frac{1-\eta^-}{\eta^-}} g(t,X^{u,-}_t) \, u_t\Biggr)\ud t+g(t,X^{u,-}_t)\ud W_t, 	\\
& z_-^{T,\delta}(t,x)=\max_{u\in\mathcal{U}}\,E_{t,x}\left[\exp\left(\eta^-\log s^{T+\delta}(T,X^{u,-}_T)-\frac{1}{2}\int^T_t\|u_s\|^2\ud s\right)\right],
\ea\right.\ee
\be	\label{eq:control_fwd+}
\left\{\ba
& dX^{u,+}_t=\Biggl(f(t,X^{u,+}_t)+g(t,X^{u,+}_t)g(t,X^{u,+}_t)^{\top}\displaystyle{\frac{\nabla_x\hat{p}^T(t,X^{u,+}_t)}{\hat{p}^T(t,X^{u,+}_t)}}\Biggr.\\
&\qquad\qquad\quad\Biggl.
-\sqrt{\frac{\eta^+-1}{\eta^+}} g(t,X^{u,+}_t) \, u_t\Biggr)\ud t+g(t,X^{u,+}_t)\ud W_t,	\\
& z_+^{T,\delta}(t,x)=\min_{u\in\mathcal{U}}\,E_{t,x}\left[\exp\left(\eta^+\log s^{T+\delta}(T,X^{u,+}_T)+\frac{1}{2}\int^T_t\|u_s\|^2\ud s\right)\right],
\ea\right.\ee
\end{subequations}
where the function $\hat{p}^T$ is as in Assumption \ref{ass:Markov_ZCB}.
Suppose that $z^{T,\delta}_{\pm}\in\mathcal{C}^{1,2}$ and there exists optimal controls belonging to $\mathcal{U}$ for both problems \eqref{eq:control_fwd-} and \eqref{eq:control_fwd+}. Assume also that $(z^{T,\delta}_-)^{1/\eta^-}\in\mathcal{C}_{\rm un}$ and $(z^{T,\delta}_+)^{1/\eta^+}\in\mathcal{C}_{\rm un}$ and the natural candidates for the optimal controls defined by
\be	\label{eq:opt_control_fwd}
u^{*,-}(t,x) := \sqrt{\frac{1-\eta^-}{\eta^-}}g(t,x)^{\top}\frac{\nabla_x z_-^{T,\delta}(t,x)}{z_-^{T,\delta}(t,x)},
\quad
u^{*,+}(t,x) := \sqrt{\frac{\eta^+-1}{\eta^+}}g(t,x)^{\top}\frac{\nabla_x z_+^{T,\delta}(t,x)}{z_+^{T,\delta}(t,x)}
\ee
belong to $\mathcal{U}$. Then it holds that 
\[
\bigl(z_-^{T,\delta}(t,x)\bigr)^{1/\eta^-} = s^{T,\delta}(t,x) = \bigl(z_+^{T,\delta}(t,x)\bigr)^{1/\eta^+}.
\]
\end{thm}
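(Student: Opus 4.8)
The plan is to follow the proof of Theorem~\ref{thm:control_spot} essentially verbatim in structure, treating the two problems \eqref{eq:control_fwd-} and \eqref{eq:control_fwd+} in full analogy and carrying out the details only for the concave problem \eqref{eq:control_fwd-}. First I would use the assumed existence of an optimal control $u^{{\rm opt},-}\in\cU$ with associated state process $X^{{\rm opt},-}$ and the definition of the value function to conclude that
\[
z_-^{T,\delta}(t,X^{{\rm opt},-}_t)\,e^{-\frac{1}{2}\int_0^t\|u^{{\rm opt},-}_s\|^2\ud s},
\qquad t\in[0,T],
\]
is a martingale. The one structural difference with respect to Theorem~\ref{thm:control_spot} is that here the term $\eta^-\log s^{T+\delta}(T,X^{u,-}_T)$ enters the objective as a \emph{terminal} payoff rather than as a running cost; consequently it contributes only the terminal condition $z_-^{T,\delta}(T,x)=\bigl(s^{T+\delta}(T,x)\bigr)^{\eta^-}$ and does not appear in the exponential weight of the above martingale.

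Next I would apply It\^o's formula to this product and set its drift equal to zero, obtaining a nonlinear PDE for $z_-^{T,\delta}$ whose first-order part carries the correction $g g^{\top}\nabla_x\hat p^T/\hat p^T$ — the term encoding the local martingale property of benchmarked ZCB prices exploited in Corollary~\ref{cor:fwd_spread} and Lemma~\ref{lem:PDE_fwd}, replacing the $-g g^{\top}\nabla_x v^*/v^*$ of Theorem~\ref{thm:control_spot}; note that no $\varphi\, z_-^{T,\delta}$ term is present here. Inserting the feedback candidate $u^{*,-}=\sqrt{(1-\eta^-)/\eta^-}\,g^{\top}\nabla_x z_-^{T,\delta}/z_-^{T,\delta}$ (admissible by hypothesis) and collecting the two resulting quadratic terms in $\nabla_x z_-^{T,\delta}$ — one from the controlled drift after substitution, one from $-\frac{1}{2}z_-^{T,\delta}\|u^{*,-}\|^2$ — leaves the single nonlinearity $\frac{1}{2}\frac{1-\eta^-}{\eta^-}\bigl(\nabla_x^{\top}z_-^{T,\delta}/z_-^{T,\delta}\bigr)g g^{\top}\nabla_x z_-^{T,\delta}$.

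I would then apply the power transformation $s':=(z_-^{T,\delta})^{1/\eta^-}$: writing $z_-^{T,\delta}=(s')^{\eta^-}$ and expressing $\nabla_x z_-^{T,\delta}$ and $Hz_-^{T,\delta}$ in terms of $\nabla_x s'$ and $Hs'$, the extra nonlinear term produced by transforming $\tr(g^{\top}Hz_-^{T,\delta}g)$ cancels exactly the one isolated above, and division by $\eta^-(s')^{\eta^--1}>0$ yields precisely the linear PDE \eqref{eq:PDE_fwd} for $s'$, with terminal condition $s'(T,x)=(z_-^{T,\delta}(T,x))^{1/\eta^-}=s^{T+\delta}(T,x)$ — exactly the terminal condition of Lemma~\ref{lem:PDE_fwd}. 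Since $(z_-^{T,\delta})^{1/\eta^-}\in\cC_{\rm un}$ by assumption and \eqref{eq:PDE_fwd} has a unique solution in $\cC_{\rm un}$, namely $s^{T,\delta}$, I conclude $(z_-^{T,\delta})^{1/\eta^-}=s^{T,\delta}$. The convex problem \eqref{eq:control_fwd+} is handled identically, with $1-\eta^-$ replaced by $\eta^+-1$, the sign of the control term in the drift reversed, and the weight $e^{+\frac{1}{2}\int\|u\|^2\ud s}$ in place of $e^{-\frac{1}{2}\int\|u\|^2\ud s}$; the same cancellation after $s':=(z_+^{T,\delta})^{1/\eta^+}$ again delivers \eqref{eq:PDE_fwd} with terminal value $s^{T+\delta}(T,\cdot)$, whence $(z_+^{T,\delta})^{1/\eta^+}=s^{T,\delta}$.

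The computation itself is routine; as in Theorem~\ref{thm:control_spot}, the genuine content lies in the standing hypotheses that make it rigorous — existence of optimal controls in $\cU$, $\cC^{1,2}$-regularity of $z_\pm^{T,\delta}$, admissibility of the feedback candidates \eqref{eq:opt_control_fwd}, and membership of the power transforms in the chosen uniqueness class — which, in the spirit of Remark~\ref{rem:control_bond}, are left to be checked model by model. The one place where a little care is needed is the terminal condition: one must invoke the Markovian representation \eqref{eq:spot_Markov} (as in the proof of Lemma~\ref{lem:PDE_fwd}) to identify $z_\pm^{T,\delta}(T,\cdot)=(s^{T+\delta}(T,\cdot))^{\eta^\pm}$ as a power of the terminal value of the \emph{spot} spread function, so that the transformed terminal condition coincides with that of PDE \eqref{eq:PDE_fwd}.
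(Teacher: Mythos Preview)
Your proposal is correct and follows essentially the same approach as the paper's own proof: martingale property of $z_-^{T,\delta}(t,X^{{\rm opt},-}_t)e^{-\frac{1}{2}\int_0^t\|u^{{\rm opt},-}_s\|^2\ud s}$, It\^o's formula, substitution of the admissible feedback candidate $u^{*,-}$, the power transformation $s'=(z_-^{T,\delta})^{1/\eta^-}$ reducing to the linear PDE \eqref{eq:PDE_fwd} with terminal condition $s^{T+\delta}(T,\cdot)$, and then the uniqueness in $\cC_{\rm un}$. Your discussion of the structural differences from Theorem~\ref{thm:control_spot} (terminal versus running cost, the $gg^{\top}\nabla_x\hat p^T/\hat p^T$ drift correction in place of $-gg^{\top}\nabla_x v^*/v^*$, and the absence of the $\varphi$-term) is accurate and matches the paper's argument.
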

\begin{proof}
The proof is similar to that of Theorem \ref{thm:control_spot}. Since problems \eqref{eq:control_fwd-} and \eqref{eq:control_fwd+} have the same structure, we only consider here the case $\eta^-$. By assumption, problem \eqref{eq:control_fwd-} admits an optimal control $u^{{\rm opt},-}\in\mathcal{U}$. We denote by $X^{{\rm opt},-}$ the solution to the SDE in \eqref{eq:control_fwd-} for $u=u^{{\rm opt},-}$. By definition of $z^{T,\delta}_-$, optimality of $u^{{\rm opt},-}$ implies that the process 
\[
z_-^{T,\delta}(t,X^{{\rm opt},-}_t)e^{-\frac{1}{2}\int_0^t\|u^{{\rm opt},-}_s\|^2\ud s},
\qquad t\in[0,T],
\]
is a martingale. 
Making use of this property, an application of It\^o's formula yields that
\begin{align*}
&\partial_t z_-^{T,\delta}(t,x)+\nabla^{\top}_xz_-^{T,\delta}(t,x)\left(f(t,x)+g(t,x)g(t,x)^{\top}\frac{\nabla_x\hat{p}^T(t,x)}{\hat{p}^T(t,x)}+\sqrt{\frac{1-\eta^-}{\eta^-}}g(t,x)u^{{\rm opt},-}(t,x)\right)	\\
&\; +\frac{1}{2}\tr\bigl(g(t,x)^{\top}Hz_-^{T,\delta}(t,x)\,g(t,x)\bigr)
-\frac{1}{2}z_-^{T,\delta}(t,x)\|u^{{\rm opt},-}(t,x)\|^2=0,
\end{align*}
for all $(t,x)\in[0,T)\times D$. 
Since by assumption the natural candidate $u^{*,-}$ for the optimal control given in \eqref{eq:opt_control_fwd} belongs to $\mathcal{U}$, we can replace $u^{{\rm opt},-}$ in the last equation by $u^{*,-}$, thus obtaining
\begin{align*}
&\partial_t z_-^{T,\delta}(t,x)+\nabla^{\top}_xz_-^{T,\delta}(t,x)\left(f(t,x)+g(t,x)g(t,x)^{\top}\frac{\nabla_x \hat{p}^T(t,x)}{\hat{p}^T(t,x)}\right)
+\frac{1}{2}\tr\bigl(g(t,x)^{\top}Hz_-^{T,\delta}(t,x)\,g(t,x)\bigr)	\\
&\; +\frac{1}{2}\frac{1-\eta^-}{\eta^-}\frac{\nabla^{\top}_x z^{T,\delta}_-(t,x)}{z^{T,\delta}_-(t,x)}g(t,x)g(t,x)^{\top}{\nabla}_xz_-^{T,\delta}(t,x) 
 = 0,
\end{align*}
for all $(t,x)\in[0,T)\times D$, with terminal condition $z_-^{T,\delta}(T,x)=(s^{T+\delta}(T,x))^{\eta^-}$, for all $x\in D$, where the function $s^{T+\delta}$ is given in Lemma \ref{lem:PDE_spot}.
Define the function $s':[0,T]\times D\to\R$ by $s'(t,x):=(z_-^{T,\delta}(t,x))^{1/\eta^-}$, for all $(t,x)\in[0,T]\times D$. Applying this transformation to the previous PDE, we arrive at the following PDE:
\[
\partial_t s'(t,x)+\nabla^{\top}_xs'(t,x)\left(f(t,x)+g(t,x)g(t,x)^{\top}\frac{\nabla_x\hat{p}^T(t,x)}{\hat{p}^T(t,x)}\right)	
+\frac{1}{2}\tr\bigl(g(t,x)^{\top}Hs'(t,x) \, g(t,x)\bigr)=0,
\]
for all $(t,x)\in[0,T)\times D$, with terminal condition $s'(T,x)=s^{T+\delta}(T,x)$, for all $x\in D$.
We have therefore shown that the function $s'$ solves the PDE \eqref{eq:PDE_fwd}.
Since by assumption in the class $\cC_{\rm un}$ there exists a unique solution to \eqref{eq:PDE_fwd} with terminal condition $s^{T,\delta}(T,x)=s^{T+\delta}(T,x)$ and $(z^{T,\delta}_-)^{1/\eta^-}\in\cC_{\rm un}$, it follows that $z^{T,\delta}_-(t,x)=(s^{T,\delta}(t,x))^{\eta^-}$, for all $(t,x)\in[0,T]\times D$.
\end{proof}

Problems \eqref{eq:control_fwd-}-\eqref{eq:control_fwd+} can be interpreted similarly to problems \eqref{eq:control_spot-}-\eqref{eq:control_spot+}, in line with the interpretation presented after Theorem \ref{thm:control_spot}. The main difference is that, while problems \eqref{eq:control_spot-}-\eqref{eq:control_spot+} refer to {\em rolled-over} borrowing/lending operations, problems  \eqref{eq:control_fwd-}-\eqref{eq:control_fwd+} refer to {\em term} borrowing/lending.
In this perspective, problem  \eqref{eq:control_fwd-} (problem \eqref{eq:control_fwd+}, resp.) reflects the viewpoint of a representative lender (borrower, resp.) who aims at maximizing (minimizing, resp.) the term premium for tenor $\delta$, being subject to a quadratic penalization.
A remark analogous to Remark \ref{rem:interpretation} (as well as Remark \ref{rem:eta}) holds in the case of Theorem \ref{thm:control_fwd}, leading to a possible game theoretic interpretation of forward spreads.

\begin{rem}
The term $g(t,X^{u,\pm}_t)^{\top}\nabla_x\hat{p}^T(t,X_t^{u,\pm})/\hat{p}^T(t,X_t^{u,\pm})$ in the drift of the controlled process $X^{u,\pm}$ in problems \eqref{eq:control_fwd-}-\eqref{eq:control_fwd+} corresponds to the volatility of a benchmarked ZCB with maturity $T$. This term arises because the forward spread $S_{\cdot}(T,T+\delta)$ is a local martingale when multiplied by $\hat{P}(\cdot,T)$, see Corollary \ref{cor:fwd_spread}.
\end{rem}

\section{Risk-sensitive preferences and the funding-liquidity spread}	\label{sec:sensitive}

In the previous sections, we have described a generic interest rate market where ZCBs and SPSs referencing spot term rates are traded, under the weak assumption of market viability. We have seen in Sections \ref{sec:ROrisk}-\ref{sec:spot} that the presence of roll-over risk, encoded in the funding-liquidity spread $\varphi$, allows for an explanation of the multi-curve phenomenon, i.e., the existence of (spot and forward) spreads between (spot and forward) term rates and simple risk-free forward rates. However, the funding-liquidity spread $\varphi$ has been considered until now as an exogenous quantity (an exogenously given function in the Markovian setting of Section \ref{sec:control}). 

In this section, we aim at providing a possible way to determine the funding-liquidity spread by referring to the preference structure of a representative investor with risk-sensitive preferences.
This approach is related to the stochastic control representations of spot and forward spreads obtained in Theorems \ref{thm:control_spot} and \ref{thm:control_fwd}, which provide a link between spot/forward spreads and power-type preferences of a representative lender/borrower. 
As in Theorems \ref{thm:control_spot} and \ref{thm:control_fwd}, our representative investor will have power-type preferences.

To carry out this program, we first consider in Section \ref{sec:RS} the risk-sensitive optimal investment problem of a representative investor, in the context of a finite-dimensional Markovian model of a financial market, relying on the results of \cite{Nagai03}. By relying on these preparatory results, we shall show in Section \ref{sec:spread} how the funding-liquidity spread can be endogenously determined and, in particular, related to the risk aversion coefficient of the representative investor.

\subsection{A risk-sensitive optimal investment problem}	\label{sec:RS}

We consider the Markovian setting introduced in Section \ref{sec:markov}, with a factor process $X$ satisfying \eqref{eq:sde_X}. 
The risk-free savings account is given by $S^0=\exp(\int_0^{\cdot}r_t\ud t)$, where $r_t=r(t,X_t)$ as in Assumption \ref{ass:Markov}.
We restrict our attention to a finite time horizon $T$.
We specify further the financial market by assuming that the family of assets $\cS$ consists in a finite set of $m$ assets, for instance composed of ZCBs and SPSs for a finite number of maturities.
For each $i=1,\ldots,m$, the price process $S^i$ of the $i$-th asset is given by
\be	\label{eq:sde_S}
\ud S^i_t = S^i_t \, \mu^i(t,X_t)\ud t + S^i_t \, \sigma^i(t,X_t)\ud W_t,
\qquad S^i_0=s^i_0>0,
\ee
where the functions $\mu^i:[0,T]\times\R_+\to\R$ and $\sigma^i:[0,T]\times\R_+\to\R^d$ are sufficiently smooth to ensure the existence of a unique strong solution to \eqref{eq:sde_S}, for each $i=1,\ldots,m$.
To exclude the possibility of redundant assets, we assume that the matrix $\sigma(t,X_t)\in \R^{m\times d}$ is of full rank a.s. for all $t\in[0,T]$. We do not require the financial market to be complete.
Note that there is no loss of generality in assuming that the SDEs \eqref{eq:sde_X} and \eqref{eq:sde_S} are driven by the same $d$-dimensional Brownian motion $W$. 

If the asset prices dynamics are given by \eqref{eq:sde_S}, the market price of risk $\theta$ has the Markovian structure stated in Assumption \ref{ass:Markov}. More specifically, it holds that
\[
\theta_t = \sigma^+(t,X_t)\bigl(\mu(t,X_t)-r(t,X_t)\mathbf{1}\bigr)
=: \theta(t,X_t),
\qquad\text{ for all }t\in[0,T],
\]
where $\sigma^+(t,X_t)$ is the Moore-Penrose pseudoinverse of the matrix $\sigma(t,X_t)$, $\mathbf{1}=(1,\ldots,1)^{\top}\in\R^m$ and $\mu(t,X_t)=(\mu^1(t,X_t),\ldots,\mu^m(t,X_t))^{\top}$.
As already mentioned in Section \ref{sec:markov}, in the present finite-dimensional financial market the market viability requirement \eqref{eq:NUPBR} is equivalent to the validity of the condition $\theta\in\Lloc$, i.e., $\int_0^T\|\theta_t\|^2\ud t<+\infty$ a.s. (see, e.g., \cite[Corollary 4.3.19]{FontanaRunggaldier13}). This condition is assumed to be in force until the end of this section.

We suppose that the representative investor can trade in the $m$ available assets by means of self-financing strategies described by $\R^m$-valued predictable processes $\pi=(\pi_t)_{t\in[0,T]}$, where $\pi^i_t$ represents the proportion of wealth invested in the $i$-th risky asset, for each $i=1,\ldots,m$. 
The wealth process $V^{\pi}=(V^{\pi}_t)_{t\in[0,T]}$ associated to a self-financing strategy $\pi$ satisfies
\be	\label{eq:sde_V}
\frac{\ud V^{\pi}_t}{V^{\pi}_t}
= r(t,X_t)\ud t + \pi^{\top}_t\bigl(\mu(t,X_t)-r(t,X_t)\mathbf{1}\bigr)\ud t + \pi^{\top}_t\sigma(t,X_t)\ud W_t,
\ee
with initial wealth conventionally set at $V^{\pi}_0=1$.
The set $\cA$ of admissible strategies is defined as
\[
\cA := \left\{ \pi=(\pi_t)_{t\in[0,T]}\text{ $\R^m$-valued and predictable such that }\int_0^T\|\pi^{\top}_t\sigma(t,X_t)\|^2<+\infty\text{ a.s.} \right\}.
\]
Note that, as a consequence of Cauchy-Schwarz's inequality, if market viability holds then equation \eqref{eq:sde_V} is well-posed for every $\pi\in\cA$ (see, e.g., \cite[Lemma 4.3.21]{FontanaRunggaldier13}).

In this financial market, we consider a representative investor aiming at solving the following optimal investment problem:
\be	\label{eq:RS_problem}
E\bigl[(V^{\pi}_T)^{\gamma}\bigr] = \min!
\qquad\text{ over all }\pi\in\cA \text{ such that }E\bigl[(V^{\pi}_T)^{\gamma}\bigr]<+\infty.
\ee
The parameter $\gamma<0$ represents a risk aversion parameter. Problem \eqref{eq:RS_problem} belongs to the class of risk-sensitive investment problems (we refer to \cite{DavisLleo14} for a complete overview on the topic).

In the present Markovian setting, a complete characterization of the solution to problem \eqref{eq:RS_problem} has been derived in \cite{Nagai03}, under the assumption that the functions appearing in \eqref{eq:sde_X} and \eqref{eq:sde_S} are globally Lipschitz, together with additional technical assumptions that we implicitly assume here to be verified (see condition (2.4) and the assumptions of Theorem 2.1 in \cite{Nagai03}).
\cite[Proposition 2.1]{Nagai03} shows that the optimal strategy $\pi^*_t$ has the following structure:
\begin{align}
\pi^{\gamma}_t = \pi(t,X_t)
&= \frac{1}{1-\gamma}\bigl(\sigma(t,X_t)\sigma(t,X_t)^{\top}\bigr)^{-1}\sigma(t,X_t)\bigl(\theta(t,X_t)+\gamma g(t,X_t)^{\top}\Xi(t,X_t)\bigr)	
\notag\\
&= \frac{\pi^*_t}{1-\gamma} + \frac{\gamma}{1-\gamma}\bigl(\sigma(t,X_t)\sigma(t,X_t)^{\top}\bigr)^{-1}\sigma(t,X_t)g(t,X_t)^{\top}\Xi(t,X_t),
\label{eq:strategy_RS}
\end{align}
where $\pi^*_t$ is the growth-optimal strategy (generating the portfolio process $V^*$, see \cite[Chapter 10]{PlatenHeath}) and $\Xi(t,X_t)$ is the gradient of the solution to the Bellman equation (2.14) in \cite{Nagai03} with respect to the components of the factor process $X$.

For later use, we compute the dynamics of the optimal portfolio process $V^{\pi^{\gamma}}$ associated to the optimal risk-sensitive strategy $\pi^{\gamma}$ given in \eqref{eq:strategy_RS}:
\be	\label{eq:pf_dynamics}	\ba
\frac{\ud V^{\pi}_t}{V^{\pi}_t}
&= r(t,X_t)\ud t + \frac{\theta(t,X_t)^{\top}}{1-\gamma}\left(\theta(t,X_t)+\gamma g(t,X_t)^{\top}\Xi(t,X_t)\right)\ud t \\
&\quad + \frac{1}{1-\gamma}\left(\theta(t,X_t)+\gamma\sigma^+(t,X_t)\sigma(t,X_t)g(t,X_t)^{\top}\Xi(t,X_t)\right)\ud W_t.
\ea	\ee

\begin{rem}[Structure of the optimal strategy]
\label{rem:RS_strategy}
As can be seen from formula \eqref{eq:strategy_RS}, the optimal strategy $\pi^{\gamma}_t$ is invested into two mutual funds: the GOP and an additional portfolio that represents an intertemporal hedging component, which arises due to the randomness generated by the factor process $X$. The proportion of wealth allocated to the two funds varies according to the risk aversion coefficient $\gamma$.
In the limit for $\gamma\to0$, the optimal strategy $\pi^{\gamma}_t$ reduces to the growth-optimal strategy $\pi^*_t$, i.e., the optimal strategy for logarithmic preferences.
\end{rem}

\begin{rem}[On the risk aversion parameter]
(1) Problem \eqref{eq:RS_problem} can also be analyzed for $\gamma\in(0,1)$. However, in order to rely on the results of \cite{Nagai03}, we need to restrict our attention to the case $\gamma<0$. This is also justified by the fact that, for $\gamma\in(0,1)$, the risk-sensitive criterion \eqref{eq:RS_problem} is more prone to risk in comparison to logarithmic preferences, while for the purposes of Section \ref{sec:spread} we are interested in preference structures that exhibit a greater degree of risk aversion in comparison to logarithmic preferences.

(2) As mentioned in Remark \ref{rem:eta}, the control problems considered in Theorems \ref{thm:control_spot} and \ref{thm:control_fwd} can also be solved for $\eta^+<0$, since this choice leads to a convex optimization problem with the same structure and solution as in the case $\eta^+>1$. Choosing $\eta^+<0$ enables us to work with a common risk aversion parameter $\eta^+=\gamma$. This choice corresponds to considering a single representative agent who is facing two possible control setups: one that concerns the minimization of the funding-liquidity spread (problem \eqref{eq:control_spot+}) or the term premium (problem \eqref{eq:control_fwd+}), and one that concerns a risk-sensitive optimal investment as considered in \eqref{eq:RS_problem}. In both cases, the parameter $\eta^+=\gamma$ encodes the risk attitude of the representative investor.
\end{rem}

\subsection{The funding-liquidity spread}	\label{sec:spread}

We now rely on the solution of the representative investor's risk-sensitive problem \eqref{eq:RS_problem} to determine the funding-liquidity spread $\varphi$.

We start by observing that, 
in the presence of roll-over risk (i.e., if $\varphi>0$), the process $\tilde{S}^0/V^*$ fails to be a local martingale, as shown in Lemma \ref{lem:ROrisk}. As discussed in Remark \ref{rem:log_price}, using $1/V^*$ as LMD can be regarded as adopting a marginal utility pricing rule based on a logarithmic utility function $U(x)=\log(x)$. Indeed, the num\'eraire portfolio $V^*$ coincides with the log-optimal portfolio (when the latter is well-defined, see Remark \ref{rem:log_price}) and $U'(V^*_t)=1/V^*_t$. According to this viewpoint, the fact that $U'(V^*)\tilde{S}^0$ fails to be a local martingale can be interpreted as an evidence of the fact that the roll-over-risk-adjusted borrowing account $\tilde{S}^0$ is not priced correctly by a representative investor with logarithmic preferences. 
One could say that logarithmic preferences exhibit a ``myopia'' that does not allow to ``see'' properly roll-over risk and, therefore, do not justify the existence of a funding-liquidity spread $\varphi$.

On the basis of this reasoning, we modify the preference structure of our representative investor, replacing logarithmic preferences with power-type preferences, as considered in \eqref{eq:RS_problem}. The underlying idea is that a risk-sensitive representative investor, being more risk averse than a logarithmic investor, should correctly price funding-liquidity risk.
This leads us to consider a marginal pricing rule associated to a utility function of the form $U(x)=x^{\gamma}$, with $\gamma<0$, corresponding to the risk-sensitive preferences considered in Section \ref{sec:RS}. This preference structure yields the marginal utility process
\be	\label{eq:V_power}
Y_t := \bigl(V_t^{\pi^{\gamma}}\bigr)^{\gamma-1},
\qquad\text{ for }t\in[0,T].
\ee
As shown in the next proposition, the assumption that $Y$ correctly prices the roll-over-risk-adjusted borrowing account $\tilde{S}^0$, in the sense that the product $Y\tilde{S}^0$ is a local martingale, leads to an explicit expression for the funding-liquidity spread $\varphi$ in terms of the risk aversion parameter $\gamma$ and of the solution to the risk-sensitive optimal investment problem.
For brevity of notation, we use the shorter notation $\varphi_t$ to denote $\varphi(t,X_t)$, and similarly for all other processes.

\begin{prop}	\label{prop:RS_spread}
Let $Y$ be defined as in \eqref{eq:V_power}, where  $V^{\pi^{\gamma}}$ is the optimal portfolio process for the risk-sensitive problem \eqref{eq:RS_problem}. Then $Y\tilde{S}^0$ is a local martingale if and only if the funding-liquidity spread $\varphi$ satisfies
\be	\label{eq:RS_spread}
\varphi_t = -\gamma r_t + \theta^{\top}_t\bigl(\theta_t+\gamma  g_t^{\top}\Xi_t\bigr) - \frac{2-\gamma}{2(1-\gamma)}\|\theta_t+\gamma g_t^{\top}\Xi_t\|^2,
\qquad\text{ for all }t\in[0,T].
\ee
\end{prop}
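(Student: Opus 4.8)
The plan is to compute the dynamics of $Y\tilde{S}^0 = (V^{\pi^\gamma})^{\gamma-1}\exp(\int_0^\cdot\tilde{r}_s\ud s)$ explicitly via It\^o's formula, identify the finite-variation part, and then invoke the fact that a continuous local martingale has vanishing drift. Since $\tilde{S}^0$ has finite variation, the product rule gives $\ud(Y_t\tilde{S}^0_t) = \tilde{S}^0_t\ud Y_t + Y_t\tilde{S}^0_t\tilde{r}_t\ud t$ with no bracket term, so the whole task reduces to computing the drift of $Y_t = (V^{\pi^\gamma}_t)^{\gamma-1}$.

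First I would apply It\^o's formula to the map $v\mapsto v^{\gamma-1}$ composed with the portfolio dynamics \eqref{eq:pf_dynamics}. Writing $b_t$ for the drift coefficient and $a_t := \frac{1}{1-\gamma}\bigl(\theta_t+\gamma\sigma^+_t\sigma_t g_t^\top\Xi_t\bigr)$ for the diffusion coefficient appearing in \eqref{eq:pf_dynamics}, one gets
\[
\frac{\ud Y_t}{Y_t} = (\gamma-1)\,b_t\ud t + \frac{(\gamma-1)(\gamma-2)}{2}\|a_t\|^2\ud t + (\gamma-1)\,a_t\ud W_t,
\]
where $b_t = r_t + \frac{1}{1-\gamma}\theta_t^\top(\theta_t+\gamma g_t^\top\Xi_t)$. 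Adding the contribution $\tilde{r}_t = r_t+\varphi_t$ from $\tilde{S}^0$, the local martingale property of $Y\tilde{S}^0$ is equivalent to
\[
\varphi_t + r_t + (\gamma-1)b_t + \tfrac{1}{2}(\gamma-1)(\gamma-2)\|a_t\|^2 = 0
\]
for a.e.\ $t$. Solving for $\varphi_t$ and substituting $b_t$ and $a_t$ should produce \eqref{eq:RS_spread}, after simplifying the coefficient $\frac{(1-\gamma)(2-\gamma)}{2(1-\gamma)^2} = \frac{2-\gamma}{2(1-\gamma)}$ and noting that $\|\sigma^+_t\sigma_t g_t^\top\Xi_t\|$ enters through $\|a_t\|^2$; the combination $\|\theta_t+\gamma g_t^\top\Xi_t\|^2$ in the stated formula suggests that the projection $\sigma^+\sigma$ can be absorbed because $\theta_t$ already lies in the range of $\sigma^\top_t$ (as $\theta_t = \sigma^+_t(\mu_t-r_t\mathbf{1})$), so $\sigma^+_t\sigma_t\theta_t = \theta_t$ and, more delicately, one must check that the relevant inner products involving $g_t^\top\Xi_t$ only see its projection onto that range. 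I would make this reduction explicit using $\sigma^+\sigma = (\sigma^+\sigma)^\top = (\sigma^+\sigma)^2$ and the identity $\theta_t^\top\sigma^+_t\sigma_t = \theta_t^\top$.

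The converse direction is immediate: if $\varphi_t$ is given by \eqref{eq:RS_spread}, then by the computation above the drift of $Y\tilde{S}^0$ vanishes identically, and since $Y\tilde{S}^0$ is a strictly positive continuous process with zero-drift It\^o decomposition (the stochastic integral $\int(\gamma-1)a_s\ud W_s$ being well-defined because $\pi^\gamma\in\cA$ and $\Xi$ is locally bounded under Nagai's assumptions), it is a local martingale. The main obstacle I anticipate is bookkeeping: correctly handling the pseudoinverse term $\sigma^+\sigma$ so that $\|a_t\|^2$ collapses to $\frac{1}{(1-\gamma)^2}\|\theta_t+\gamma g_t^\top\Xi_t\|^2$ and the cross term in $b_t$ matches; this hinges on the algebraic fact that $\sigma^+\sigma$ acts as the orthogonal projection onto $\mathrm{range}(\sigma^\top)$ and that the quantity $\gamma\sigma^+\sigma g^\top\Xi$ appears consistently with $\theta$ throughout \eqref{eq:strategy_RS}–\eqref{eq:pf_dynamics}. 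Once that identification is in place, the rest is a routine rearrangement.
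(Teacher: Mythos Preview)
Your proposal is correct and follows essentially the same route as the paper: apply It\^o's formula to $v\mapsto v^{\gamma-1}$ using the portfolio dynamics \eqref{eq:pf_dynamics}, then integration by parts with the finite-variation process $\tilde{S}^0$, and read off the drift condition. The paper's proof does exactly this, arriving at the same intermediate identity $r_t+\varphi_t=(1-\gamma)r_t+\theta_t^\top(\theta_t+\gamma g_t^\top\Xi_t)-\tfrac{2-\gamma}{2(1-\gamma)}\|\theta_t+\gamma g_t^\top\Xi_t\|^2$. Your caution about the projector $\sigma^+\sigma$ in the diffusion coefficient is in fact more careful than the paper, which passes silently from $\|\theta_t+\gamma\sigma^+_t\sigma_t g_t^\top\Xi_t\|^2$ to $\|\theta_t+\gamma g_t^\top\Xi_t\|^2$ without comment.
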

\begin{proof}
As a first step, we compute the dynamics of the process $Y$ by means of It\^o's formula:
\[
\frac{\ud Y_t}{Y_t} = (\gamma-1)\left(r_t + \frac{\theta^{\top}_t}{1-\gamma}\bigl(\theta_t+\gamma g^{\top}_t\Xi_t\bigr) +\frac{1}{2}\frac{\gamma-2}{1-\gamma}\|\theta_t+\gamma g^{\top}_t\Xi_t\|^2\right)\ud t
+ \frac{1}{1-\gamma}\bigl(\theta_t+\gamma g^{\top}_t\Xi_t\bigr)\ud W_t.
\]
Then, applying integration by parts, we obtain that $Y\tilde{S}^0$ is a local martingale if and only if
\[
r_t + \varphi_t = (1-\gamma)r_t + \theta^{\top}_t(\theta_t+\gamma  g_t^{\top}\Xi_t) - \frac{1}{2}\frac{2-\gamma}{1-\gamma}\|\theta+\gamma g^{\top}_t\Xi_t\|^2,
\]
which is equivalent to condition \eqref{eq:RS_spread}.
\end{proof}

We can observe that for $\gamma=0$ (corresponding to the limiting case of logarithmic preferences, see Remark \ref{rem:RS_strategy}), the  funding-liquidity spread resulting from equation \eqref{eq:RS_spread} is null, confirming the interpretation discussed at the beginning of this subsection.

\section{Conclusions}

In this work, we have proposed a general perspective on interest rate markets affected by roll-over risk. After providing a general view on term structure models with roll-over risk, in the philosophy of the benchmark approach, we have focused on representing spot/forward spreads as value functions of suitable stochastic optimal control problems. The stochastic control formulation enables us to view the values of the spreads as the result of optimization problems of representative lenders/borrowers. A key quantity in our approach is represented by the funding-liquidity spread. We have proposed a way to endogenously determine the latter quantity by relating it to the risk aversion of a representative investor with risk-sensitive preferences.

Among the possible further developments of our work, we believe that it would be interesting to formulate an equilibrium model where market participants are subject to roll-over risk, for instance along the lines of \cite{GarleanuPedersen11}. This equilibrium framework would be more elaborate than the simple approach outlined in Section \ref{sec:sensitive} and would provide an endogenous explanation for the appearance of a funding-liquidity spread in interest rate markets.

\bibliographystyle{alpha-abbrvsort}
\bibliography{bib_FPR}

\begin{thebibliography}{CCFM17}

\bibitem[AGS20]{Alfeus20}
M.~Alfeus, M.~Grasselli, and E.~Schl\"ogl.
\newblock A consistent stochastic model of the term structure of interest rates
  for multiple tenors.
\newblock {\em Journal of Economic Dynamics and Control}, 114:103861, 2020.

\bibitem[BMSS23]{BMSS23}
A.~Backwell, A.~Macrina, E.~Schl\"ogl, and D.~Skovmand.
\newblock Term rates, multicurve term structures and overnight rate benchmarks:
  a roll-over risk approach.
\newblock {\em Frontiers of Mathematical Finance}, 2(3):340--384, 2023.

\bibitem[BS20]{BalintSchweizer20}
D.A. Balint and M.~Schweizer.
\newblock Large financial markets, discounting, and no asymptotic arbitrage.
\newblock {\em Theory of Probability and its Applications}, 65:191--223, 2020.

\bibitem[BS22]{BalintSchweizer22}
D.A. Balint and M.~Schweizer.
\newblock Making no-arbitrage discounting-invariant: a new {FTAP} version
  beyond {NFLVR} and {NUPBR}.
\newblock {\em Frontiers of Mathematical Finance}, 1(2):249--286, 2022.

\bibitem[Bec01]{Becherer01}
D.~Becherer.
\newblock The numeraire portfolio for unbounded semimartingales.
\newblock {\em Finance and Stochastics}, 5(3):327--341, 2001.

\bibitem[Bj{\"o}20]{Bjork}
T.~Bj{\"o}rk.
\newblock {\em Arbitrage Theory in Continuous Time}.
\newblock Oxford University Press, Oxford, 4th edition, 2020.

\bibitem[BDKR97]{BDMKR97}
T.~Bj\"ork, G.~{Di Masi}, Y.~Kabanov, and W.J. Runggaldier.
\newblock Towards a general theory of bond markets.
\newblock {\em Finance and Stochastics}, 1:141--174, 1997.

\bibitem[BKR97]{BKR97}
T.~Bj\"ork, Y.~Kabanov, and W.J. Runggaldier.
\newblock Bond market structure in the presence of marked point processes.
\newblock {\em Mathematical Finance}, 7(2):211--223, 1997.

\bibitem[CT06]{CarmonaTehranchi}
R.~Carmona and M.~Tehranchi.
\newblock {\em Interest Rate Models: An Infinite Dimensional Stochastic
  Analysis Perspective}.
\newblock Springer, Berlin - Heidelberg, 2006.

\bibitem[CCFM17]{CCFM17}
H.N. Chau, A.~Cosso, C.~Fontana, and O.~Mostovyi.
\newblock Optimal investment with intermediate consumption under no unbounded
  profit with bounded risk.
\newblock {\em Journal of Applied Probability}, 54(3):710--719, 2017.

\bibitem[CDM15]{CDM15}
T.~Choulli, J.~Deng, and J.~Ma.
\newblock How non-arbitrage, viability and num\'eraire portfolio are related.
\newblock {\em Finance and Stochastics}, 19:719--741, 2015.

\bibitem[CGR13]{Cogo13}
R.~Cogo, A.~Gombani, and W.J. Runggaldier.
\newblock Stochastic control and pricing under swap measures.
\newblock In R.C. Dalang, M.~Dozzi, and F.~Russo, editors, {\em Seminar on
  Stochastic Analysis, Random Fields and Applications {VII}}, pages 363--380.
  Birkh\"auser, Basel, 2013.

\bibitem[CFG16]{CFG16}
C.~Cuchiero, C.~Fontana, and A.~Gnoatto.
\newblock A general {HJM} framework for multiple yield curve modeling.
\newblock {\em Finance and Stochastics}, 20(2):267--320, 2016.

\bibitem[DL15]{DavisLleo14}
M.H.A. Davis and S.~Lleo.
\newblock {\em Risk-Sensitive Investment Management}.
\newblock World Scientific, Singapore, 2015.

\bibitem[DR93]{DeFrancescoRunggaldier93}
C.~{De Francesco} and W.J. Runggaldier.
\newblock On logarithmic and other transformations in stochastic control.
\newblock In {\em Proceedings of the 29th annual conference of the Operational
  Research Society of New Zealand}, 1993.

\bibitem[FP09]{FilipovicPlaten09}
D.~Filipovi\'c and E.~Platen.
\newblock Consistent market extensions under the benchmark approach.
\newblock {\em Mathematical Finance}, 19(1):41--52, 2009.

\bibitem[FT13]{FilipovicTrolle13}
D.~Filipovi\'c and A.B. Trolle.
\newblock The term structure of interbank risk.
\newblock {\em Journal of Financial Economics}, 109(3):707--733, 2013.

\bibitem[FR13]{FontanaRunggaldier13}
C.~Fontana and W.J. Runggaldier.
\newblock Diffusion-based models for financial markets without martingale
  measures.
\newblock In F.~Biagini, A.~Richter, and H.~Schlesinger, editors, {\em Risk
  Measures and Attitudes}, EAA Series, pages 45--81. Springer, London, 2013.

\bibitem[Fri75]{Friedman75}
A.~Friedman.
\newblock {\em Stochastic Differential Equations and Applications}.
\newblock Academic Press, New York, 1975.

\bibitem[GSS17]{Gallitschke17}
J.~Gallitschke, S.~Seifried, and F.T. Seifried.
\newblock Interbank interest rates: funding liquidity risk and {XIBOR} basis
  spreads.
\newblock {\em Journal of Banking and Finance}, 78:142--152, 2017.

\bibitem[GP11]{GarleanuPedersen11}
N.~Garleanu and L.H. Pedersen.
\newblock Margin-based asset pricing and deviations from the law of one price.
\newblock {\em Review of Financial Studies}, 24(6):1980--2022, 2011.

\bibitem[GM11]{GawareckiMandrekar}
L.~Gawarecki and V.~Mandrekar.
\newblock {\em Stochastic Differential Equations in Infinite Dimensions}.
\newblock Springer, Berlin - Heidelberg, 2011.

\bibitem[GR13]{GR13}
A.~Gombani and W.J. Runggaldier.
\newblock Arbitrage-free multifactor term structure models: a theory based on
  stochastic control.
\newblock {\em Mathematical Finance}, 23(4):659--686, 2013.

\bibitem[GR15]{GrbacRunggaldier}
Z.~Grbac and W.J. Runggaldier.
\newblock {\em Interest Rate Modeling: Post-Crisis Challenges and Approaches}.
\newblock Springer, Cham, 2015.

\bibitem[JS03]{JS03}
J.~Jacod and A.N. Shiryaev.
\newblock {\em Limit Theorems for Stochastic Processes}.
\newblock Springer, Berlin - Heidelberg - New York, 2nd edition, 2003.

\bibitem[KKS16]{KKS16}
Y.~Kabanov, C.~Kardaras, and S.~Song.
\newblock No arbitrage of the first kind and local martingale num{\'e}raires.
\newblock {\em Finance and Stochastics}, 20(4):1097--1108, 2016.

\bibitem[KK07]{KaratzasKardaras07}
I.~Karatzas and C.~Kardaras.
\newblock The numeraire portfolio in semimartingale financial models.
\newblock {\em Finance and Stochastics}, 11(4):447--493, 2007.

\bibitem[KK21]{KK21}
I.~Karatzas and C.~Kardaras.
\newblock {\em Portfolio Theory and Arbitrage: A Course in Mathematical
  Finance}.
\newblock American Mathematical Society, Providence (Rhode Island), 2021.

\bibitem[Kar22]{Kardaras22}
C.~Kardaras.
\newblock Stochastic integration with respect to arbitrary collections of
  continuous semimartingales and applications to mathematical finance.
\newblock {\em Annals of Applied Probability}, forthcoming, 2022.

\bibitem[Nag03]{Nagai03}
H.~Nagai.
\newblock Optimal strategies for risk-sensitive portfolio optimization problems
  for general factor models.
\newblock {\em SIAM Journal on Control and Optimization}, 41(6):1779--1800,
  2003.

\bibitem[Pas11]{Pascucci}
A.~Pascucci.
\newblock {\em {PDE} and Martingale Methods in Option Pricing}.
\newblock Springer, Milan - Dordrecht - Heidelberg - London - New York, 2011.

\bibitem[Pav22]{Pavarana22}
S.~Pavarana.
\newblock {\em A Stochastic Control Perspective of Multi-Curve Term Structures
  under the Benchmark Approach}.
\newblock Master thesis, Department of Mathematics, University of Padova,
  available at \url{https://thesis.unipd.it/handle/20.500.12608/35013}, 2022.

\bibitem[PH06]{PlatenHeath}
E.~Platen and D.~Heath.
\newblock {\em A Benchmark Approach to Quantitative Finance}.
\newblock Springer, Berlin - Heidelberg, 2006.

\bibitem[SS23]{SS22}
J.B. Skov and D.~Skovmand.
\newblock Decomposing {LIBOR} in transition: evidence from the futures markets.
\newblock {\em Quantitative Finance}, 23(6):959--978, 2023.

\end{thebibliography}

\end{document}